\newcommand{\addr}{\mathrm{addr}} 
\newcommand{\addrof}[1]{\addr(#1)} 
\newcommand{\thread}{\mathrm{thread}} 
\newcommand{\threadof}[1]{\thread(#1)} 
\newcommand{\length}{\mathrm{length}} 
\newcommand{\lengthof}[1]{\length(#1)} 
\newcommand{\delays}{\mathrm{delays}} 
\newcommand{\delaysof}[1]{\delays(#1)} 
\newcommand{\reorders}{\mathrm{reorders}} 
\newcommand{\reordersof}[1]{\reorders(#1)} 
\newcommand{\mm}{\mathrm{\$}} 
\newcommand{\mmof}[1]{\mm(#1)} 
\newcommand{\trace}{\mathrm{Tr}} 
\newcommand{\traceof}[1]{\trace(#1)} 
\newcommand{\psocomp}{C} 
\newcommand{\psocompof}[1]{\psocomp(#1)} 
\newcommand{\sccomp}{C_{\mathrm{SC}}} 
\newcommand{\sccompof}[1]{\sccomp(#1)} 
\newcommand{\aprogram}{\mathcal{P}} 
\newcommand{\athread}{\mathrm{t}} 
\newcommand{\anaction}{a} 
\newcommand{\anotheraction}{b} 
\newcommand{\athirdaction}{c} 
\newcommand{\acc}{\mathrm{acc}} 
\newcommand{\alab}{\textsf{l}}
\newcommand{\areg}{r}
\newcommand{\anaddr}{a}
\newcommand{\anexpr}{e}
\newcommand{\isu}{\mathrm{isu}}
\newcommand{\st}{\mathrm{st}}
\newcommand{\ld}{\mathrm{ld}}
\newcommand{\fence}{\mathrm{fence}}
\newcommand{\scfence}{\mathrm{scfence}}
\newcommand{\loc}{\mathrm{loc}}
\newcommand{\FUN}{\mathrm{FUN}}
\newcommand{\DOM}{\mathrm{DOM}}
\newcommand{\VAR}{\mathrm{VAR}}
\newcommand{\LAB}{\mathrm{LAB}}
\newcommand{\THRD}{\mathrm{THRD}}
\newcommand{\ACT}{\mathrm{ACT}}
\newcommand{\projection}[2]{#1\downarrow{}#2}
\newcommand{\hb}{\rightarrow_{\text{hb}}}
\newcommand{\po}{\rightarrow_{\text{po}}}
\newcommand{\sto}{\rightarrow_{\text{st}}}
\newcommand{\src}{\rightarrow_{\text{src}}}
\newcommand{\cf}{\rightarrow_{\text{cf}}}
\newcommand{\cfst}{\rightarrow_{\text{cf/st}}}
\newcommand{\srcst}{\rightarrow_{\text{src/st}}}
\newcommand{\popluscfst}{\rightarrow_{\text{po$^+$/cf/st}}}
\newcommand{\transition}[1]{\xrightarrow{#1}}
\newcommand{\thetransition}[3]{\ensuremath{#1}\lit*{:}\ #3\lit*{;}\ \lit*{goto}\ \ensuremath{#2}\lit*{;}}
\newcommand{\thestore}[2]{\lit*{mem[}\ensuremath{#1}\lit*{]} \ensuremath{\leftarrow} \ensuremath{#2}}
\newcommand{\theload}[2]{\ensuremath{#1} \ensuremath{\leftarrow} \lit*{mem[}\ensuremath{#2}\lit*{]}}
\newcommand{\thecondition}[1]{\lit*{assert}\ \ensuremath{#1}}
\newcommand{\thelocal}[2]{\ensuremath{#1} \ensuremath{\leftarrow} \ensuremath{#2}}
\newcommand{\themem}[1]{\lit*{mem[}\ensuremath{#1}\lit*{]}}
\newcommand{\thescfence}[0]{\lit*{scfence}}
\newcommand{\thefence}[1]{\ensuremath{\lit*{fence}_{#1}}}
\newcommand{\sem}[1]{[\![#1]\!]}
\newcommand{\semattackerstinst}[1]{\sem{#1}_{\attack 1}}
\newcommand{\semattackerlastinst}[1]{\sem{#1}_{\attack 2}}
\newcommand{\semattacker}[1]{\sem{#1}_{\attack 3}}
\newcommand{\semhelperorig}[1]{\sem{#1}_{\textsf{H0}}}
\newcommand{\semhelpertrans}[1]{\sem{#1}_{\textsf{H1}}}
\newcommand{\semhelpercpy}[1]{\sem{#1}_{\textsf{H2}}}
\newcommand{\attack}{\textsf{A}}
\newcommand{\stinst}{\textsf{stinst}}
\newcommand{\lastinst}{\textsf{lastinst}}
\newcommand{\attacker}{\textsf{t}_{\textsf{A}}}
\newcommand{\attackstore}{\textsf{st}_{\textsf{A}}}
\newcommand{\attackerwaitlabel}{\tilde\alab_{\textsf{wait}}}
\newcommand{\attackerwait}{\textsf{wait}}
\newcommand{\loadacc}{\textsf{lda}}
\newcommand{\storeacc}{\textsf{sta}}
\newcommand{\auxdelayed}[1]{(#1, \textsf{d})}
\newcommand{\auxaccesslevel}[1]{(#1, \textsf{hb})}
\newcommand{\auxhb}{\textsf{hb}}
\newcommand{\auxsuc}{\textsf{suc}}
\newcommand{\auxfence}{\areg_\textsf{fence}}
\newcommand{\auxaddress}{\areg_{\attackstore}}
\newcommand{\auxdelayval}{\areg_{\textsf{delayval}}}
\newcommand{\mytrue}{\mathsf{true}}
\newcommand{\maxfun}{\textsf{max}}
\newcommand{\pcconf}{\lit*{pc}}
\newcommand{\valconf}{\lit*{val}}
\newcommand{\bufconf}{\lit*{buf}}
\newcommand{\flag}{\mathit{flag}}
\newcommand\blfootnote[1]{%
  \begingroup
  \setlength\footnotemargin{0pt}%
  \renewcommand\thefootnote{}\footnotetext{#1}%
  \endgroup
}
\title{Locality and Singularity\\ for  
Store-Atomic Memory Models}
\author{Egor Derevenetc$^{\text{1,3}}$\qquad Roland Meyer$^{\text{2}}$\qquad Sebastian Schweizer$^{\text{3}}$\vspace{-0.2cm}}
\institute{$^{\text{1}}$Fraunhofer ITWM\qquad $^{\text{2}}$TU Braunschweig\qquad $^{\text{3}}$TU Kaiserslautern}
\begin{document}
\maketitle

\begin{abstract}
Robustness is a correctness notion for concurrent programs running under relaxed consistency models. 
The task is to check that the relaxed behavior coincides (up to traces) with sequential consistency (SC).
Although computationally simple on paper (robustness has been shown to be PSPACE-complete for TSO, PGAS, and Power),
building a practical robustness checker remains a challenge.
The problem is that the various relaxations lead to a dramatic number of computations, only few of which violate robustness.

In the present paper, we set out to reduce the search space for robustness checkers. 
We focus on store-atomic consistency models and establish two completeness results. 
The first result, called locality, states that a non-robust program always contains a violating computation where only one thread delays commands.
The second result, called singularity, is even stronger but restricted to programs without lightweight fences.
It states that there is a violating computation where a single store is delayed. 

As an application of the results, we derive a linear-size source-to-source translation of robustness to SC-reachability. 
It applies to general programs, regardless of the data domain and potentially with an unbounded number of threads and with unbounded buffers. 
We have implemented the translation and verified, for the first time, PGAS algorithms in a fully automated fashion.
For TSO, our analysis outperforms existing tools.

\end{abstract}

\blfootnote{\scriptsize This work was supported by the DFG project \emph{R2M2: Robustness against Relaxed Memory Models}.}

\section{Introduction}
Performance drives the design of computer architectures. 
The computation time of a task depends on the time it takes to access the memory. 
To reduce the access time, the idea is to place the data close to the compute unit. 
This idea is applied on virtually all design layers, from multiprocessors to high-performance computing clusters.
Yet, the realization is different. 
Multiprocessors like Intel's x86~\cite{SewellCACM2010} and Sparc's PSO~\cite{sparc-v9-manual} implement thread-local instruction buffers that allow to execute store commands without waiting for the memory.  
The effect of buffered stores will be visible to other threads only when the multiprocessor decides to batch process the buffer,
thus leading to a reordering of instructions.
Clusters often implent a programming model called partitioned global address space (PGAS), either in terms of APIs like SHMEM~\cite{chapman2010introducing}, ARMCI~\cite{nieplocha1999armci}, GASNET~\cite{bonachea2002gasnet}, GPI~\cite{machado2009fraunhofer}, and GASPI~\cite{GASPI}, or by HPC languages like UPC~\cite{UPC}, Titanium~\cite{hilfinger2005titanium}, and Co-Array Fortran~\cite{numrich1998co}.  
The PGAS model joins the partitioned memories of the cluster nodes into one (virtual) global memory.  
The selling point of PGAS is one-sided communication:  
A thread can modify a part of the global memory that resides in another node, without having to synchronize with that node.
The drawback is the network delay. 
Although already computed, it may take a moment to install a value in the memory of another node. 

Moving the data to the computation is delicate. 
When the data is shared, it has to be split into copies, one copy for each thread holding the datum. 
But then, in the interest of performance, updates to one copy cannot be propagated immediately to the other copies. 
This means the computations have to \emph{relax} the guarantees given by an atomic memory and captured by the notion of sequential consistency (SC)~\cite{Lamport79}. 
The commands no longer take effect on the global memory in program order but may be reordered by the architecture.
An important guarantee of SC, however, remains true in all the above models: \emph{Store atomicity}. 
Once a store command arrives at the global memory, it is visible to all threads.

Programming a shared memory is difficult. 
Having to take into account the reorderings of the architecture makes programming in the presence of relaxed consistency close to impossible. 
SC-preserving compilers have been proposed as an approach to the problem and receive considerable attention~\cite{ShashaSnir88,MidkiffFences2003,scpreserving,AlglaveM11,BMM11,endend12}.
The idea is to let the developer implement for SC, and make it the task of the compiler to insert synchronization primitives that justify the SC-assumption for the targeted architecture. 
Algorithmically, justifying the SC-assumption amounts to checking a problem known as \emph{robustness} (against the architecture of interest): For every relaxed computation there has to be an SC-computation with the same behavior.
The notion of behavior to be preserved typically (and also in this work) is the happens-before traces~\cite{lamport1978time}.  
When developing an SC-preserving compiler, checking robustness is the main task. 
Inferring synchronization primitives from indications of non-robustness is better understood~\cite{Kuperstein2010,AlglaveCAV10,VafeiadisN11,KupersteinVY12,AbdullaACLR12,BDM13,BMC2013,Cycles2014,SAS14,AAP15}.

An SC-preserving compiler needs an over-approximate robustness analysis that should be as precise as possible. 
Under-approximations like bounded model checking~\cite{BurckhardtCaseStudyCAV2006,Burckhardt2007,BMC2013}, simulation~\cite{Cats14}, or monitoring~\cite{burckhardt-musuvathi-CAV08,Sen2011} may miss non-robust computations and insert too few fences.
Over-approximations, if too coarse, lead to over-fencing. 
Although decision procedures for robustness exist~\cite{BMM11,CDMM13,DM14}, building an efficient and yet precise robustness checker remains a challenge.  
The problem is the immense degree of non-determinism brought by the instruction reorderings that is hard to reflect in the analysis.   
This non-determinism forces over-approximations into explicitly modeling architectural details like instruction buffers~\cite{AbdullaACLR12,KupersteinVY11,AlglaveKNT13,VMCAIVechev15} (operational approaches) or right-away operating on the code (axiomatic approaches)~\cite{ShashaSnir88,AlglaveM11,Cycles2014}. 

In this paper, we contribute two semantical results about robustness that limit the degree of non-determinism that has to be taken into account by algorithmic analyses. 
Both results state that robustness violations can be detected --- in a complete way --- with a restricted form of reorderings. 
The first result, called locality, states that only one thread needs to make use of instruction reorderings.
The other threads operate as if they were running under SC:
\emph{A program is not robust if and only if there is a violating computation where exactly one thread delays stores.}
The second result, called singularity, is even stronger:
\emph{A program without lightweight fence instructions is not-robust if and only if there is a violating computation with exactly one delayed store.}
Note that a program without delays is robust.
This means the result is an optimal characterization of non-robustness.
Singularity only holds in the absence of lightweight fences. 
We do not consider this a severe limitation.
Robustness is meant as a subroutine for fence inference inside an SC-preserving compiler.
In that setting, programs naturally come without fences. 


Our third contribution shows that the development of specialized robustness analyses can be avoided.  
Utilizing locality and singularity, we give an instrumentation that reduces robustness to reachability \emph{under SC}. 
By instrumentation, we mean a source-to-source translation of a given program $\aprogram$ into a program $\aprogram'$ so that the former is robust if and only if the latter does not reach under SC a designated state. 
This allows us to employ for the analysis of robustness all techniques and tools that have been developed for SC-reachability. 
As a side-effect, we obtain the decidability of robustness for parameterized programs over a finite data domain.
The restriction to finite data domains is not necessary for the instrumentation itself, but for the back-end SC-reachability analysis.


Concerning the model, we show that locality holds for virtually all store-atomic consistency models (singularity holds in the absence of dependencies).
Inspired by~\cite{ABBM12}, we introduce a programming language for concurrent programs that is meant to act as a programming framework for store-atomic models.
The syntax of our language is an assembly dialect enriched with a variety of fence commands. 
The semantics is defined weak enough to support the relaxations found in the models discussed above.  
What makes our programming language a programming framework is that, given a program, we can add appropriate fences to obtain the behavior under SC, TSO, PSO, and PGAS. 
The motivation for having a programming framework is that we can show locality and singularity once for this model, and it will then hold for all instances of the framework.



For improved readability, this paper contains only a comprehensible high-level explanation of our techniques.
The full technical formalizations and proofs can be found in the appendix.

\section{Related Work}
Robustness checks that the relaxed behavior of a program is the same as the behavior under SC. 
The definition is relative to a notion of behavior, and there are various proposals in the literature~\cite{ShashaSnir88,AbdullaACLR12,AAP15}.  
To make the most of the consistency model in terms of performance, the notion of behavior should be liberal enough to equate quite distinct computations. 
On the other hand, it should be strong enough to be easy to check algorithmically.
Equivalence of the happens-before traces~\cite{ShashaSnir88} appears to be a good compromise between expressiveness and algorithmics, and is the favored notion in the literature on robustness~\cite{burckhardt-musuvathi-CAV08,Sen2011,BMM11,AlglaveM11,BDM13,CDMM13}. 
Abdulla et al. recently proposed an alternative that is incomparable with the happens-before traces~\cite{AAP15}.
The idea is to preserve the total ordering of all stores and drop the relations for loads. 
This equivalence leads to efficient algorithmics for TSO but does not seem to fit well with consistency models beyond TSO. 
State-space equivalence from~\cite{AbdullaACLR12} leads to non-primitive recursive lower bounds for robustness, and will therefore also not be our choice.

\begin{figure}[t]
\newcommand{\ours}{{\cellcolor[gray]{.8}}}
\newcommand{\cons}[1]{(#1)}
\begin{gather*}
\begin{aligned}
\begin{tabular}{l||c|c|c||r|}
			Memory Model\		&\ Ordered \ &\ Locality \ &\ Singularity\ &\ Cost Function \ \\	
    \hline\hline
	TSO	&			\cons{$\surd$}	\phantom{\cite{DM14}}	& \cons{$\surd$} \cite{BMM11}			& \ours{\textsf{X}}	& Delays  \\ \hline
	PGAS $-$ $\texttt{fence}$\		& \cons{$\surd$} \phantom{\cite{DM14}} & \cons{$\surd$} \phantom{\cite{BMM11}} &				\ours{$\surd$} & \ours{DRL (Sec.~\ref{Section:minimal-violations})}\\ \hline
	PGAS $+$ $\texttt{fence}$\		& \cons{$\surd$} \cite{CDMM13} & \phantom{(}\ours{$\surd$}\phantom{) \cite{BMM11}} &				\ours{\textsf{X}}	& \ours{DRL (Sec.~\ref{Section:minimal-violations})}\\ \hline
    Power\		& \phantom{(}$\surd$\phantom{)} \cite{DM14} & \phantom{(}?\phantom{) \cite{BMM11}} &	\ours{\textsf{X}}	& Length	\\ 
    \hline
\end{tabular}
\end{aligned}\\
	\begin{aligned}
		\surd&:\,	\text{holds} &
		\textsf{X}&:\, \text{fails} &
		?&:\, \text{open}&
	\end{aligned}\\[-.3em]
	\begin{aligned}
			(-)&:\, \text{implied from right or below}&
		\text{shaded}&:\, \text{this paper} 	\end{aligned}
\end{gather*}
\vspace{-0.5cm}
\caption{Normal-form results for violating computations under relaxed consistency.}
\label{Figure:Results}
\end{figure}

In our earlier work on TSO~\cite{BDM13}, PGAS~\cite{CDMM13}, and Power~\cite{DM14}, we established normal form results similar to locality and singularity and also made use of the combinatorial proof principle. 
We elaborate on why the reasoning in this paper is substantially different from our earlier efforts. 
Figure~\ref{Figure:Results} summarizes the comparison. 
The results about PGAS~\cite{CDMM13} and Power~\cite{DM14} rely on a normal form (Ordered in Figure~\ref{Figure:Results}) for violating computations where all threads delay commands. 
As the normal form gives weak guarantees, it can be established with a cost function that simply measures the length of violating computations. 
The value of these earlier results is in that they apply to all consistency models which forbid out-of-thin-air values, while giving the precise complexity (but no useful algorithms).
For TSO, we proved locality in~\cite{BMM11}. 
As TSO architectures have one buffer per thread, we were able to apply a cost function that only measures delays, i.e., the number of commands that are processed while a store is being buffered (Figure~\ref{Figure:Results}).
In this paper, we also have to account for overtakes of stores in different buffers.
Another aspect is that, for TSO, we managed to leave the happens-before trace unchanged.
We tried to lift this strategy to PGAS but failed.
Instead, we show how to construct a different computation that still is a violation.
Taking apart the computation was a big step.

The instrumentation we present in Section~\ref{Section:Instrumentation} is related to our work on TSO~\cite{BDM13}. 
To be precise, we adapt the instrumentation of the attacker (that delays commands) to the assumption of singularity and to the more relaxed consistency model. 
The instrumentation of the helper threads (that close the happens-before cycle) is the one from ESOP'13.
Since our earlier work assumes locality, the new instrumentation (for singularity) is more compact (we save a linear number of auxiliary addresses). 
An alternative instrumentation-based robustness analysis is presented in~\cite{AlglaveKNT13}.
The instrumentation precomputes an optimized cycle and then checks robustness wrt. that cycle.
The idea can be understood as trading one complex verification task (robustness) for a number of tasks (robustness wrt. a cycle) that can be solved more efficiently. 
A strong point is that the approach is general enough to apply to non-store-atomic consistency models, including Power.
For TSO, PSO, and PGAS, also that instrumentation will benefit (in terms of size) from locality and singularity.  
Related is also the instrumentation in~\cite{AAP15}.
Atig et al. only have to add two variables to the program, which means the instrumentation is more compact than ours. 
But, as explained above, the approach does not seem to be generalizable beyond TSO. 

Instrumentations that mimic the effect of instruction buffers can also be found in reachability analyses for relaxed consistency models.
Bouajjani et al. apply the idea of bounded context switching to TSO~\cite{ABP11}.
Vechev et al.~\cite{VMCAIVechev15} precompute the utilization of the instruction buffers in TSO and PSO, and show how to mimic them under SC without having to shift content between variables.
Common to both works is that they eagerly simulate the effect of buffers. 
In~\cite{BCDM15}, we show how to introduce store buffering lazily, and only where needed to satisfy a TSO reachability query:
The idea is to guide the store buffering by non-robust computations. 
Hence, also reachability analyses benefit from the improvements for robustness presented here.

In this paper, we focus on store-atomic memory models.
Although store atomicity feels like a natural requirement, important multiprocessors like Power~\cite{SarkarPLDI2011,Cats14,DM14} or ARM~\cite{Cats14} are known to be non-store atomic. 
There, stores to independent variables may arrive in independent threads in a different order.
What remains true is coherence. 
All threads see the stores to each variable in the same order.
A major challenge for future work is to understand whether a locality result can be established for Power, potentially under mild assumptions on the programs.
With the results in this paper, singularity can be shown not to hold, Figure~\ref{Figure:Results}.
\section{Locality and Singularity on an Example}\label{Section:Illustration}
We illustrate the concept of robustness and the main results on singularity and locality with the help of message passing, an idiom common in shared memory concurrent programming.
The code is as follows:
\vspace{0.2cm}
\begin{center}
\begin{minipage}{5.5cm}
\begin{alltt}
\theprogram{MessagePassing}{
\thethread{\athread_w}{}{l_0}{
\thetransition{l_0}{l_1}{\thestore{d_1}{1}}
\thetransition{l_1}{l_2}{\thestore{d_2}{1}}
\thetransition{l_2}{l_3}{\thestore{\flag}{1}}
}}
\end{alltt}
\end{minipage}
\hspace{0.5cm}
\begin{minipage}{5.5cm}
\vspace{0.2cm}
\begin{alltt}
\thethread{\athread_r}{r}{l_x}{\thetransition{l_x}{l_y}{\theload{r}{\flag}}
\thetransition{l_y}{l_{z}}{\thecondition{r = 1}}
\thetransition{l_{z}}{l}{\theload{r}{d_1}}
}
\end{alltt}
\end{minipage}
\end{center}
\vspace{0.2cm}
The task of the idiom is to signal to thread $\athread_r$ (called reader) that the data written by thread $\athread_w$ (called writer) is ready for use.
The idiom is useful when the data to be written cannot be updated atomically. 
The situation is as follows. 
The writer stores data in the addresses $d_1$ and $d_2$. 
To hand over the data to the reader, $\athread_w$ raises $\flag$ (initially, all addresses are $0$).
The reader reads the flag and finds it raised.
From this, it concludes that $d_1$ has a new value and reads it. 

For message passing to work, we need the following guarantee: Before the flag is raised, the stores to $d_1$ and $d_2$ have to be visible in memory.
The guarantee holds for SC and TSO but does not hold for PGAS.
Indeed, without further synchronization the above code will fail on a cluster.  
The stores to $d_1$ and $d_2$ may be delayed due to network latencies (they may be big packages).
While these stores are being processed, the flag is successfully set. 
The reader sees the flag and reads the old (or broken due to an incomplete transfer) value at $d_1$.

The message passing idiom is not robust. 
It works properly under SC but fails when ported to PGAS.
Formally, robustness requires that for every computation under PGAS there is a computation under SC that has the same happens-before trace. 
We will show that this is not the case. 
Consider the computation 
\begin{equation*}
\tau = \isu_a\cdot\isu_b\cdot \isu_c\cdot c\cdot d\cdot e\cdot f\cdot b\cdot a,
\end{equation*}
where $\isu_a$ gives the moment when action $a$ is issued and $a$ itself gives the moment when the command is executed on memory.
The meaning of the actions is given in Figure~\ref{Figure:Trace}, which shows the happens-before trace $\traceof{\tau}$ of the computation.
The happens-before trace reflects the crucial dependencies among the actions: The program order, the store order, and the source and conflict relations between loads and stores. 
Computations with the same happens-before trace only differ in how they shuffle independent action.
For computation $\tau$, we see that the flag raised by action $c$ is read by action $d$ (indicated by the source relation), but the load of $d_1$ in action $f$ is overwritten by the later action $a$ (conflict relation).   
\begin{figure}[t]
\begin{center}
\begin{tikzpicture}[nodes={rectangle,draw=none,fill=none}]
  \matrix[row sep=0.3cm,column sep=2cm,nodes={rectangle,draw=none,fill=none}] {
     \node (stored1) {$a\colon$ \thestore{d_1}{1}}; & \node (loadf) {$d\colon$ \theload{r}{\flag}}; \\
 \node (stored2) {$b\colon$ \thestore{d_2}{1}}; & \node (assert) {$e\colon$ \thecondition{r = 1}}; \\
 \node (storef) {$c\colon$ \thestore{\flag}{1}}; & \node (loadd1) {$f\colon$ \theload{r}{d_1}}; \\
  };
\draw[->] (stored1) edge node[midway,left] {$\mathit{po}$} (stored2);
\draw[->] (stored2) edge node[midway,left] {$\mathit{po}$} (storef);
\draw[->] (loadf) edge node[midway,right] {$\mathit{po}$} (assert);
\draw[->] (assert) edge node[midway,right] {$\mathit{po}$} (loadd1);
\path[->] (loadd1.north west) edge node[near end,above] {$\mathit{cf}$} (stored1.south east);
\path[->] (storef.north east) edge node[near end,above] {$\mathit{src}$} (loadf.south west);
\end{tikzpicture}
\end{center}
\caption{Happens-before trace $\traceof{\tau}$ of computation $\tau$.
}
\label{Figure:Trace}
\end{figure}
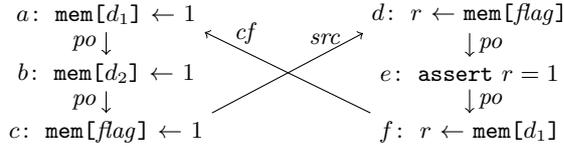

The load at $f$ obtains an old value although the flag was raised.
Under SC where commands take immediate effect this is impossible. 
Formally, the fundamental lemma of Shasha and Snir states that for a relaxed computation $\tau$ there is an SC-computation $\sigma$ with $\traceof{\tau}=\traceof{\sigma}$ if and only if $\traceof{\tau}$ is acyclic~\cite{ShashaSnir88}.
As the above $\traceof{\tau}$ is cyclic, computation $\tau$ has no SC-equivalent. 
Hence, message passing is not robust.

Our singularity result shows that the above computation is unnecessarily complicated.
The theorem states that we need to delay only one action in order to find a robustness violation (a computation with a cyclic trace).
In the example, we can avoid the delay of action $b$.
Computation
\begin{align*}
\tau' = \isu_a\cdot\isu_b\cdot b\cdot \isu_c\cdot c\cdot d\cdot e\cdot f\cdot a
\end{align*}
has the same (cyclic) happens-before trace as $\tau$ but only delays $a$.

To render formally the intuitive feeling that computation $\tau'$ is less relaxed than $\tau$, we associate with each computation a cost. 
The cost of computation $\tau$ is $\mmof{\tau}=(6, 3, 9)$.
There are $6$ \emph{delays}.  
Indeed, $a$ is delayed past $\isu_b\cdot \isu_c\cdot c\cdot b$ and $b$ is delayed past $\isu_c\cdot c$.
There are $3$ \emph{reordering}, action $a$ is issued before $b$ and $c$ but hits the memory later. Similarly, $b$ gets reordered past $c$. 
Altogether there are $9$ actions.
For computation $\tau'$, we have $\mmof{\tau'}=(4, 2, 9)$.
We compare the costs lexicographically and find $\tau'$ less relaxed. 

\section{Concurrent Programs}
\paragraph*{Syntax}
The syntax of our programming language is defined below.  
A concurrent program is identified by a name and consists of a finite set of named threads.  
The threads share a global memory. 
Moreover, each thread defines a finite set of local registers. 
The  code is given as a finite set of labelled instructions.
Each instruction includes a command and the label of the instruction to be executed next. 
To model non-deterministic choices, several instructions can have the same label.
The instruction set includes loads from memory, stores to memory, local assignments, asserts, and two kinds of fences.
SC-fences forbid relaxations altogether.
The second fence command is parameterized by a set of addresses.
To be more precise, the program comes with a domain $\DOM$ the elements of which model the data values as well as the addresses in the global memory.
We assume the domain contains a distinguished value $0 \in \DOM$ that will be used for initialization purposes.
Besides $\DOM$, there is also a function domain $\FUN$ that contains elements from $\DOM^* \rightarrow \DOM$.
All functions that are used in expressions have to stem from $\FUN$.

\vspace{0.1cm}
{\footnotesize
\setlength{\grammarindent}{5em}
\setlength{\grammarparsep}{\parskip}
\begin{center}
\begin{minipage}{6cm}
\begin{grammar}
  <prog> ::= "program" <pid> <thrd>$^*$

  <thrd> ::= "thread" <tid>\\"regs" <reg>$^*$\\"init" <label>\\"begin" <linst>$^*$ "end"

  <linst> ::= <label>":" <inst>; "goto" <label>";"

\end{grammar}
\end{minipage}
\hspace{0.5cm}
\begin{minipage}{5cm}
\begin{grammar}
  <inst> ::= <reg> $\leftarrow$ "mem["<expr>"]"
  \alt "mem["<expr>"]" $\leftarrow$ <expr>
  \alt <reg> $\leftarrow$ <expr>
  \alt "assert" <expr>
  \alt "scfence"
  \alt "fence" <expr>$^*$
  
  <expr> ::= <fun>"(" <reg>$^*$ ")"
\end{grammar}
\end{minipage}
\end{center}}

\vspace{0.4cm}

\paragraph*{Semantics}
Under SC, a store command takes immediate effect on the global memory. 
We consider consistency models that relax the program order but preserve store atomicity. 
This means the effect of a store command may not become visible immediately (to the other threads), but later commands may overtake the store and hit the memory earlier than the store.
What is guaranteed, however, is that once the store is visible it is visible to all threads.

The semantics specifies $\psocompof{\aprogram}$, i.e., the set of computations for program $\aprogram$.
We define the relaxed semantics of our assembly language in an operational style, in terms of a hardware architecture that processes instructions.
In this model, out-of-program-order computations result from the use of instruction buffers inside the architecture. 
To provide an umbrella for different store-atomic models, 
the architecture has two types of buffers.
Buffers of the first type are per thread and per address FIFO buffers that only hold stores of one thread to one address.
Buffers of the second type are per thread FIFO buffers that hold stores of one thread to potentially different addresses. 
The per-address buffers emulate PGAS. 
An all-addresses buffer mimics TSO. 

When a thread issues a store, the instruction is put into the corresponding per-address buffer.
From the per-address buffers, the store non-deterministically advances to the all-addresses buffer of that thread.
From the all-addresses buffer, the store eventually arrives at the global memory.
Due to the per-address buffers, stores of the same thread to different addresses 
may enter the memory in an order different from the order in which they were issued.  

When doing a load, a thread checks whether there is a store to the address of interest kept in one of the store buffers. 
If not, the thread loads the current value from the main memory.
If so, the thread loads the value of the most recent store in the buffers (where stores in the per-address buffers are more recent than stores in the all-addresses buffer). 
The definition ensures that the thread sees its own stores in issue order. 
Even more, for a sequential program the architecture appears to implement SC.

To synchronize different threads, the language offers two fence instructions. 
The $\texttt{scfence}$ instruction can only be executed if all buffers of the executing thread are empty. 
Additionally, we have the $\texttt{fence}$ instruction that carries a list of addresses. 
It can be executed if the buffers for the given addresses in the executing thread are empty.

We show that the framework encompasses the consistency models we aim at.\\[0.1cm]
{\bf Sequential consistency}\quad 
Lamport's SC~\cite{Lamport79} is the intuitive consistency model that reflects an atomic shared memory.  
Formally, the SC-computations are the valid interleavings of the computations of all threads. 
We can mimic SC in our framework by letting stores directly go to memory. 
To enforce this, one can insert an $\texttt{scfence}$ after each store. 
We avoid this modification and just write $\sccompof{\aprogram}$ to mean the set of SC-computations of program $\aprogram$.
\\[0.2cm]
{\bf Total Store Ordering}\quad 
TSO implements the store-to-load relaxation. Each thread has a single buffer for stores, and loads may overtake buffered stores when early reads fail. 
Notably, the thread-local total order of stores is preserved. 
Intel's x86 architecture implements TSO \cite{SewellCACM2010}.

The all-addresses buffers in our framework are meant to mimic TSO. To preserve the total order of stores, the per-address buffers must not be used. 
We can insert $\texttt{fence}$ instructions after each store command to enforce TSO.\\[0.2cm]
{\bf Partial Store Ordering}\quad 
For PSO~\cite{sparc-v9-manual}, there are two kinds of relaxation, the store-to-load relaxations of TSO and the store-to-store relaxation. The latter means that stores of the same thread to different addresses can be reflected in the global memory in an order that is different from the program order.

The per-address-buffers enable the additional store-to-store relaxation. 
As long as a store resides in its per-address buffer, stores to other addresses can be executed faster and the buffered store gets delayed. 
A program that is executed unmodified (without further fences) in our framework will run with PSO semantics. 
The benefit of PSO is that it approximates well the behavior of PGAS. \\[0.2cm]
{\bf Partitioned Global Address Space}\quad 
In a PGAS clusters, the cluster nodes create FIFO buffers to transfer data to and request data from neighboring nodes.
The transfer itself is handled by the network infrastructure.
PGAS APIs allow the user to specify the buffer that should be used for a transfer. 
A comparison and precise model of PGAS APIs can be found in~\cite{CDMM13}.
The model presented here is designed as an approximation of PGAS that is less complex and hence  easier to handle wrt. the theory we develop.  
Rather than assigning stores to buffers, we give each store a separate buffer, like in PSO. 
The user defined sharing of buffers is mimicked by the parameterized fences.
The approximation is a bit weak as it forces different buffers into a total ordering.

\section{Robustness}
We first define the happens-before relation and, based on this, the notion of robustness.
As a first step towards locality and singularity, we then work out properties of computations that violate robustness.

Given $\tau\in\psocompof{\aprogram}$, the happens-before relation highlights the crucial control and data-flow dependencies in the computation. 
Crucial means that an alternative computation with the same relations and valid delays is guaranteed to be executable in the program.
In particular, all asserts will receive the same values.
Technically, the happens-before relation $\hb\ =\ \po \cup \sto \cup \src \cup \cf$ is the union of different dependencies between two actions. 
The program order $\po$ denotes the control flow between actions of the same thread. 
The remaining orders can be summarized as follows.
Whenever two actions $x$ and $y$ operate on the same address in the shared memory, follow each other in $\tau$, and at least one of them is a store, then we have $x\hb y$.
It is common to distinguish the relations according to the commands. 
The store order $\sto$ denotes the order in which two stores to the same address reach the memory. 
The source relation $\src$ goes from a store action to a load action of the same address and denotes the fact that the load reads the value written by that store. 
The conflict relation $\cf$ is a derived relation and says that a load has to happen before a store that potentially overwrites the value to be read.

The \emph{happens-before trace} $\traceof{\tau}$ associated with computation $\tau\in\psocompof{\aprogram}$ is the directed graph where the nodes are the actions from $\tau$.
To be precise, $\isu_x$ and store or fence $x$ are represented by the same node $x$.
The edges are given by the four happens-before relations. 
Note that each computation induces a trace but several computations can have the same trace.

The \emph{robustness problem} is to check, given a concurrent program, whether the traces obtained from the computations in the model from the previous section are included in the traces of the SC computations:
\begin{quote}
Given program $\aprogram$, does $\traceof{\psocompof{\aprogram}}\subseteq \traceof{\sccompof{\aprogram}}$ hold?
\end{quote}
Algorithmically, the task is to look for \emph{violations} of robustness, computations $\tau\in\psocompof{\aprogram}$ with $\traceof{\tau} \notin \traceof{\sccompof{\aprogram}}$. 
Shasha and Snir observed that violating computations are precisely those with a cyclic happens-before relation.
\begin{lemma}[\hspace{-1sp}\cite{ShashaSnir88}]
Consider $\tau\in\psocompof{\aprogram}$. Then 
$\traceof{\tau} \in \traceof{\sccompof{\aprogram}}$ if and only if $\traceof{\tau}$ is acyclic.
\end{lemma}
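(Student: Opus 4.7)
The plan is to prove the two directions separately, with the backward direction being the substantive one.

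For the forward direction, assume $\traceof{\tau} = \traceof{\sigma}$ for some $\sigma\in\sccompof{\aprogram}$. Under SC every command takes immediate effect, so $\sigma$ is a total order on its actions and each of the constituent relations of $\hb$ (program order, store order, source, conflict) is by construction consistent with that total order. Hence $\traceof{\sigma}$ is acyclic, and therefore $\traceof{\tau}$ is acyclic.

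For the backward direction, assume $\traceof{\tau}$ is acyclic. I would fix a linearization $\sigma$ of the happens-before graph of $\tau$: a total order on the nodes of $\traceof{\tau}$ that extends $\hb$. Because $\isu_x$ and the corresponding memory event coincide for a single SC node, $\sigma$ can be read as a candidate SC run. I then have to verify two things: (i) $\sigma$ is a legal computation in $\sccompof{\aprogram}$, and (ii) $\traceof{\sigma}=\traceof{\tau}$. For (i), since $\po\subseteq\;\hb$, the projection of $\sigma$ to each thread preserves program order, so control flow is respected. The critical point is that every load $\ell$ reads in $\sigma$ the very value it read in $\tau$: if $s\src\ell$ in $\traceof{\tau}$, then for any other store $s'$ to the same address one has either $s'\sto s$ or $\ell\cf s'$ in $\traceof{\tau}$; since $\sigma$ extends $\hb$, such an $s'$ is placed either before $s$ or after $\ell$ in $\sigma$, so $s$ remains the most recent store to that address preceding $\ell$. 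Consequently all asserts in $\sigma$ evaluate as they did in $\tau$, so $\sigma$ is executable under SC.

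For (ii), the node sets of the two traces are identical by construction. The $\po$-edges are purely syntactic and thus match. The $\src$-edges match by the argument of the previous paragraph. For the $\sto$- and $\cf$-edges, note that between every pair of actions that the SC execution $\sigma$ would connect by such an edge, the corresponding pair is already related in $\traceof{\tau}$ (two stores to the same address are always ordered by $\sto$ in $\tau$, and a load/store pair on the same address is related by $\src$ or $\cf$ in $\tau$); since $\sigma$ extends these relations, no orientation is flipped and no new edge appears. This yields $\traceof{\sigma}=\traceof{\tau}\in\traceof{\sccompof{\aprogram}}$.

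The main obstacle is the intervening-store argument for loads in step (i): one has to use the conflict relation to rule out that the linearization could insert an unrelated store between $s$ and $\ell$. Once this is in place, the remaining verifications are bookkeeping on the four dependency relations.
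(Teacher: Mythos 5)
Your proof is correct and follows the same route the paper sketches: an SC computation has all happens-before edges pointing forward in time and hence an acyclic trace, and conversely an acyclic trace can be linearized into an executable SC computation with the same trace. The paper only cites Shasha--Snir and gives this argument in two sentences; your version supplies the needed verifications (in particular the use of the conflict relation to show each load still reads from its source in the linearization), which is exactly the content the citation delegates.
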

To see this, consider an acyclic trace.
It will have a linearization that forms an SC computation.
In turn, every SC computation will have an acyclic trace.
Under SC, all commands execute atomically and there is no chance for delays. 
The following development can be understood as strengthening the insight of Shasha and Snir. 
\subsection{Minimal Violations}\label{Section:minimal-violations}
To deepen our understanding of violating computations, we will concentrate on violations that are minimal in a carefully chosen order.
The main finding is the following. 
In a minimal violation, \emph{every delay is due to a cycle}. 
To derive this fact, we employ an interesting proof strategy that establishes properties by contradiction and that will occur in variants throughout the paper. 
Starting from a minimal violation, we assume the property of interest would not hold and from this deduce the existence of a smaller violation.

Technically, we define minimal violations with the help of a cost function $\mmof{-}$ mapping computations to a well-founded domain. 
Intuitively, the cost of a computation reflects its degree of relaxation. 
A minimal violation is then a violation that is as little relaxed as possible (while being a violation). 
Phrased differently, the computation is as close to SC as possible. 
Hence, the cost can also be understood as a penalty for deviating from SC.

The thing to note about the definition of cost is that we define it on the per-thread computations. 
This means that two computations $\tau_1,\tau_2\in\psocompof{\aprogram}$ with the same per-thread computations, $\projection{\tau_1}{\athread}=\projection{\tau_2}{\athread}$ for all threads $\athread$, will have the same cost, $\mmof{\tau_1}=\mmof{\tau_2}$.
Like the proof strategy, this equality will be applied over and over again.
It allows us to choose between different interleavings while preserving minimality.

Technically, the \emph{cost} of a computation is a triple of natural numbers:
\begin{align*}
\mmof{\tau} := (\delaysof{\tau}, \reordersof{\tau}, \lengthof{\tau})\in \mathbb{N}^3.
\end{align*}
We refer to the three auxiliary functions as penalty functions and define them below. 
Cost triples in $\mathbb{N}^3$ are compared lexicographically, so $(4, 0, 6)<(4,1,5)$. 
When we refer to a \emph{minimal violation}, we mean a violation $\tau$ where the cost $\mmof{\tau}$ is minimal in the set of violating computations. 

The intuitive meaning of the penalty functions is as follows. 
The function $\delaysof{-}$ increases when actions happen between an issue and the corresponding store or fence action in memory.
Such intermediary actions indicate a delay of the store. 
Since delays are impossible under SC, there is a penalty for them.
To be precise, we only consider intermediary actions from the thread that executed the store or fence. 
The function $\reordersof{-}$ gives a penalty to reordering delayed stores.
It increases when stores reach the memory in an order different from the order in which they were issued.
Keeping this function value small forces stores of the same thread to different variables to respect the program order.
Finally, function $\lengthof{-}$ gives the computation's length as we would like to focus on violations that do not contain unnecessary actions. We turn to the formalization. 

The \emph{length} of a computation $\tau$, denoted by $\lengthof{\tau}$, is the number of actions in $\tau$. 
The special case $\varepsilon$ is defined to have length zero.

To define the \emph{number of delays}, consider the computation $\tau = \tau_1 \cdot \isu_{\anaction} \cdot \tau_2 \cdot \anaction \cdot \tau_3$ where $\anaction$ is a delayed store or fence action. 
Let $\threadof{\anaction} := \athread$. 
We say that $\anaction$ \emph{overtakes} every action in $\tau_2$ projected to $\athread$.
Hence, for this one store or fence the number of delays is $\delaysof{\anaction} := \lengthof{\projection{\tau_2}{\athread}}$.
The number of delays in $\tau$ is the sum of the delays of all stores and fences:
\begin{align*}
\delaysof{\tau} := \sum_{\text{all stores and fences }\anaction\text{ in }\tau}{\delaysof{\anaction}} \ .
\end{align*}

A store or fence $\anaction$ is said to be \emph{reordered} if it overtakes another issue $\isu_\anotheraction$ together with the corresponding store or fence $\anotheraction$.
Here, $\anaction$ and $\anotheraction$ are supposed to have the same thread $\threadof{\anaction}=\athread=\threadof{\anotheraction}$.
So we have 
\begin{align*}
\tau = \tau_1\cdot \isu_\anaction\cdot \tau_2\cdot \anaction\cdot \tau_3\quad \text{with}\quad \tau_2=\tau_{2a}\cdot\isu_\anotheraction\cdot \tau_{2b}\cdot \anotheraction\cdot \tau_{2c}\ .
\end{align*}
The \emph{number of reorders} for the store $\anaction$,
denoted by $\reordersof{\anaction}$, is the number of such stores or fences $\anotheraction$ in $\projection{\tau_2}{\athread}$.  
The number of reorders in a computation $\tau$, $\reordersof{\tau}$, is again the sum of the reorders of all stores and fences in $\tau$.

\subsection{Cycles}
Our goal is to establish the following result about minimal violations.
Whenever we have an action $\anaction$ that has been overtaken by another action $\anotheraction$, then we already find a cycle involving the two.
To be more precise, either the cycle is via the intermediary actions between $\anaction$ and $\anotheraction$ (Proposition~\ref{hb-cycle-existence}(i)), or the cycle is via a dependency from $\isu_{\anotheraction}$ to $\anaction$ (Proposition~\ref{hb-cycle-existence}(ii)).
For the precise statement, we need a stronger variant of the happens-before relation. 

We will often argue that some of the intermediary actions are sufficient to establish the existence of a happens-before path between two actions. 
To make this dependence on the intermediary actions explicit, we recall the happens-before-through relation from~\cite{BMM11}. 
It can be understood as embedding the happens-before relation, or more generally the trace, into the underlying computation, which is a linear structure. 
Consider the computation $\tau=\tau_1\cdot \anaction\cdot \tau_2\cdot\anotheraction\cdot \tau_3\in\psocompof{\aprogram}$. 
We say that \emph{action $\anaction$ happens before action $\anotheraction$ through $\tau_2$}, if there is a subsequence $\anaction_1\ldots \anaction_n$ of $\tau_2$ so that one of 
\begin{align*}
\anaction_i\po^+\anaction_{i+1},\quad
\anaction_i\src\anaction_{i+1},\quad
\anaction_i\sto\anaction_{i+1},\quad\text{or}\quad
\anaction_i\cf\anaction_{i+1}
\end{align*} 
holds for all $0\leq i\leq n$ with $\anaction_0:=\anaction$ and $\anaction_{n+1}:=\anotheraction$.  
We also refer to $\anaction_0\cdot\ldots \cdot \anaction_{n+1}$  as a \emph{happens-before-through chain}. 
\vspace{0.2cm}
\begin{proposition}[Cycles]
\label{hb-cycle-existence}
Let $\tau = \tau_1 \cdot \anaction \cdot \tau_2 \cdot \anotheraction \cdot \tau_3\in \psocompof{\aprogram}$ be a minimal violation where $\anaction$ has been overtaken by $\anotheraction$.
One of the following holds:
\begin{enumerate}
\item[(i)] $\anotheraction \po^+ \anaction$ and $\anaction \hb^+ \anotheraction$ through $\tau_2$
\item[(ii)] $\anaction \po^+ \anotheraction$ and $\tau_1 = \tau_{1a} \cdot \isu_{\anotheraction} \cdot \tau_{1b}$ and $\isu_{\anotheraction} \hb^+ \anaction$ through $\tau_{1b}$.
\end{enumerate}
\textbf{Remark:} Both cases lead to a happens-before cycle with actions in $\tau_1 \cdot \anaction \cdot \tau_2 \cdot \anotheraction$.
\end{proposition}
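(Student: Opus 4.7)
The plan is to argue by contradiction against the minimality of $\tau$. The case split is forced by the setup: because $\anotheraction$ overtakes $\anaction$, both actions live in the same thread $\athread$ and $\isu_\anotheraction$ necessarily precedes $\anaction$ in $\tau$, so it lies inside $\tau_1$. As $\anaction \neq \anotheraction$, program order puts them in exactly one of the configurations $\anotheraction \po^+ \anaction$ (case (i)) or $\anaction \po^+ \anotheraction$ (case (ii)).

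In case (i), $\isu_\anotheraction$ precedes $\isu_\anaction$ inside $\tau_1$. Suppose toward contradiction that no happens-before-through chain $\anaction \hbthrough \anotheraction$ exists via $\tau_2$. I would then build $\tau' = \tau_1 \cdot \anotheraction \cdot \anaction \cdot \tau_2 \cdot \tau_3$ by commuting $\anotheraction$ leftward across $\tau_2$ and $\anaction$. The missing hb-through chain is exactly the license the store-buffer semantics needs for this reordering---any forced wait of $\anotheraction$ on an action of $\tau_2$ would assemble such a chain. In $\tau'$, the action $\anotheraction$ no longer overtakes $\anaction$ nor any same-thread action in $\tau_2$, so its delay contribution drops by at least $1 + \lengthof{\projection{\tau_2}{\athread}} \geq 1$, giving $\mmof{\tau'} < \mmof{\tau}$.

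Case (ii) is handled symmetrically. Here $\isu_\anotheraction$ lies in $\tau_1 = \tau_{1a} \cdot \isu_\anotheraction \cdot \tau_{1b}$, and both $\anaction$ and $\anotheraction$ are delayed. Assuming no hb-through chain $\isu_\anotheraction \hbthrough \anaction$ via $\tau_{1b}$, I would pull $\anaction$ earlier to sit just after $\isu_\anotheraction$, giving $\tau' = \tau_{1a} \cdot \isu_\anotheraction \cdot \anaction \cdot \tau_{1b} \cdot \tau_2 \cdot \anotheraction \cdot \tau_3$. Again the missing chain certifies semantic executability, and the rearrangement strictly reduces the delay of $\anotheraction$ (and of $\anaction$, if it is a store or fence), contradicting minimality.

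The main obstacle is the two justifications required for $\tau'$: that it is an actual computation in $\psocompof{\aprogram}$, and that it remains a violation, i.e., $\traceof{\tau'}$ is still cyclic. Executability has to be proved by turning the negated hb-through statement into a concrete sequence of admissible commutations in the per-address and all-addresses buffer semantics from the previous section. For preservation of the cycle, one observes that the only hb-edges removed by the rearrangement are precisely those whose existence was negated; hence any hb-cycle of $\traceof{\tau}$ either survives verbatim in $\traceof{\tau'}$ or would have had to contain one of the forbidden edges, which is impossible. The closing remark then falls out: case (i) gives the cycle $\anaction \hbthrough \anotheraction \po^+ \anaction$ and case (ii) gives $\anotheraction \hbthrough \anaction \po^+ \anotheraction$ (identifying $\isu_\anotheraction$ with $\anotheraction$), both lying entirely inside $\tau_1 \cdot \anaction \cdot \tau_2 \cdot \anotheraction$.
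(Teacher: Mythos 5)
Your skeleton matches the paper's: the case split on $\anotheraction \po^+ \anaction$ versus $\anaction \po^+ \anotheraction$ is exactly right, and the paper likewise obtains each missing hb-through chain by contradiction with minimality. But the paper gets both chains as immediate instances of Corollary~\ref{local-hb-dependence}, which in turn rests on the Dichotomy Lemma~\ref{dichotomy}, and it is precisely the content of that lemma that your argument replaces with an incorrect shortcut. The absence of a chain $\anaction \hb^+ \anotheraction$ \emph{through} $\tau_2$ does not make $\anotheraction$ independent of $\tau_2$: an action $c$ in $\tau_2$ can satisfy $c \sto \anotheraction$ or $c \cf \anotheraction$ without being hb-reachable from $\anaction$. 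Your $\tau' = \tau_1 \cdot \anotheraction \cdot \anaction \cdot \tau_2 \cdot \tau_3$ commutes $\anotheraction$ past such a $c$, which can reverse a store order, change the value read by a later load (breaking an assert and hence executability), or destroy the cycle; your claim that ``the only hb-edges removed are precisely those whose existence was negated'' conflates negating a \emph{chain from $\anaction$} with negating every individual edge into $\anotheraction$. This is why Lemma~\ref{dichotomy} only yields $\tau' = \tau_1 \cdot \tau_{21} \cdot \anotheraction \cdot \anaction \cdot \tau_{22} \cdot \tau_3$, with the hb-predecessors of $\anotheraction$ not reachable from $\anaction$ parked in $\tau_{21}$; establishing even that requires an induction on $\lengthof{\tau_2}$ whose base case needs a separate analysis of which actions can be overtaken (Lemma~\ref{last-overtaken-action}). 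The same objection applies to your case (ii) rearrangement of $\anaction$ across $\tau_{1b}$.

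Two further points. First, once the rearrangement is corrected to the dichotomy form, the per-thread projections are preserved, so there is no cost decrease at all; the contradiction the paper draws is simply that two actions of the same thread cannot change their relative order while every per-thread projection stays fixed. Second, your cost accounting in case (ii) fails even for your own $\tau'$: $\delaysof{\anotheraction}$ counts the $\athread$-projection of everything between $\isu_{\anotheraction}$ and $\anotheraction$, and moving $\anaction$ from one position inside that window to another leaves this count unchanged, so no strict decrease is obtained. In short, the proposition is an easy consequence of Corollary~\ref{local-hb-dependence}, but your proof attempts to establish that corollary inline, and the inline argument does not go through.
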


The proof of Proposition~\ref{hb-cycle-existence} is non-trivial and relies on a strong dichotomy result, Lemma~\ref{dichotomy}.
The lemma is a disjunction $(i)$ or $(ii)$ and should be read as two implications.
If we read it as $\neg (i)$ implies $(ii)$, it states that given a minimal violation two actions can be swapped as long as they are not separated by a happens-before-through chain.
This will allow us to modify a given computation. 
If we read the dichotomy as $\neg(ii)$ implies $(i)$, it states that given a minimal violation whenever we see two actions of the same thread we are already sure to have a happens-before-through chain between them.
\vspace{0.2cm}
\begin{lemma}[Dichotomy~\cite{BMM11}]\label{dichotomy}
In a minimal violation $\tau = \tau_1 \cdot \anaction \cdot \tau_2 \cdot \anotheraction \cdot \tau_3 \in \psocompof{\aprogram}$
\vspace{-0.5cm}
\begin{enumerate}
\item[(i)] $\anaction \hb^+ \anotheraction$ through $\tau_2$ or
\item[(ii)] there is $\tau' = \tau_1 \cdot \tau_{21} \cdot \anotheraction \cdot \anaction \cdot \tau_{22} \cdot \tau_3 \in \psocompof{\aprogram}$ with $\traceof{\tau} = \traceof{\tau'}$, $\projection{\tau}{\athread}$ = $\projection{\tau'}{\athread}$ for every thread $\athread$, and $\tau_{22}$ a subsequence of $\tau_2$.
\end{enumerate}
\end{lemma}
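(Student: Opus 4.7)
I would prove the contrapositive of $\neg(i)\Rightarrow(ii)$: assume no happens-before-through chain from $\anaction$ to $\anotheraction$ exists in $\tau_2$, and construct the required $\tau'$. The natural partition of $\tau_2$ is by reachability from $\anaction$: put in $\tau_{22}$ every action $\athirdaction\in\tau_2$ for which there is a happens-before-through chain from $\anaction$ to $\athirdaction$ in $\tau_2$, and put in $\tau_{21}$ the remaining actions, each subsequence inheriting its order from $\tau_2$. By hypothesis $\anotheraction$ is not reachable in this sense, so $\tau_{22}$ is a genuine subsequence of $\tau_2$ not containing $\anotheraction$. Setting
\[
\tau' := \tau_1 \cdot \tau_{21} \cdot \anotheraction \cdot \anaction \cdot \tau_{22} \cdot \tau_3
\]
produces a candidate of the right shape.

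Three checks are then required: (a) $\tau'\in\psocompof{\aprogram}$, (b) $\traceof{\tau'}=\traceof{\tau}$, and (c) $\projection{\tau'}{\athread}=\projection{\tau}{\athread}$ for every thread $\athread$. I would realise the passage from $\tau$ to $\tau'$ as an iterated sequence of swaps of adjacent actions $x\cdot y$ that are not linked by a direct happens-before edge, leaning on a swap lemma in the style of~\cite{BMM11} which states that such an exchange preserves both executability in the store-atomic semantics and the happens-before trace. The key point is that whenever an intended swap would move a non-dependent action of $\tau_{21}$ leftward past a dependent action (or past $\anaction$ itself), the blocking hb edge would, composed with the existing chain, yield a fresh happens-before-through chain from $\anaction$ to $\anotheraction$ in $\tau_2$, contradicting $\neg(i)$. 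Item (c) is automatic from swap invariance because two same-thread actions are always $\po$-related and hence never exchanged.

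The main obstacle is the swap lemma in the richer buffer model introduced in this paper, which is stronger than the TSO setting of~\cite{BMM11}. I would verify it case by case on the action types: a load keeps its source because neither the memory nor the entries of the relevant per-address and all-addresses buffers are touched by a swap with an hb-independent neighbour; two same-address accesses with at least one store are always linked by $\sto$, $\src$, or $\cf$ and so never qualify for a swap, keeping each FIFO discipline intact; stores to different addresses commute in both buffer types because the buffers are thread- and address-local; and the emptiness preconditions of $\texttt{scfence}$ and of the parameterised $\texttt{fence}$ mention only designated buffers whose content is preserved by any swap with an hb-independent neighbour. Iterating such swaps moves $\anotheraction$ leftward past $\anaction$ and each element of $\tau_{22}$, and pushes the actions of $\tau_{21}$ to just after $\tau_1$, which yields $\tau'$ in the required form. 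Note that minimality of $\tau$ is not used for this direction; it is merely the standing hypothesis under which the dichotomy will be applied later.
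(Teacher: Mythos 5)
There is a genuine gap, and it sits exactly where you wave it away: the same-thread case. You claim that ``two same-thread actions are always $\po$-related and hence never exchanged,'' concluding that the swap process only ever touches pairs from different threads. But in this model $\po$ is anchored at issue time: if $\anotheraction$ is a delayed store or fence that has overtaken $\anaction$, then $\anaction$ and $\anotheraction$ are from the same thread, $\anotheraction$ occurs \emph{after} $\anaction$ in the computation, yet $\anotheraction\po^+\anaction$ --- the program-order edge points backwards, and there need not be any forward happens-before edge from $\anaction$ to $\anotheraction$ at all. In that configuration $\neg(i)$ can hold, your partition would have to move $\anotheraction$ leftward past $\anaction$, and doing so changes $\projection{\tau}{\athread}$ for their common thread, so conclusion (ii) is unattainable by any swap. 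The lemma is only saved because this configuration \emph{cannot arise in a minimal violation}: the paper's proof (an induction on $\lengthof{\tau_2}$) spends its entire base case showing that such a pair would let one build a strictly cheaper violation --- using the auxiliary lemma on the shape of the last overtaken action, the FIFO discipline of the buffers, and the $\delays$/$\reorders$ components of the cost function --- contradicting minimality. Your closing remark that ``minimality of $\tau$ is not used for this direction'' is therefore wrong, and it is the symptom of the missing case: without minimality the statement is false for same-thread pairs (indeed, for same-thread $\anaction,\anotheraction$ the lemma degenerates to forcing (i), which is precisely Corollary~\ref{local-hb-dependence} and genuinely needs minimality).

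The rest of your plan is reasonable and close in spirit to the paper. Your one-shot partition of $\tau_2$ into hb-reachable and non-reachable actions, realized by adjacent commuting swaps, is a legitimate alternative to the paper's induction for the cross-thread part, and your case analysis of the swap lemma (same-address accesses are always $\hb$-linked and hence never swapped; distinct-address stores commute in the per-address and all-addresses buffers; fence preconditions only inspect buffers unaffected by an independent swap) covers the operational side correctly. To repair the proof you must add: for every adjacent pair that the swapping procedure wants to exchange and that belongs to a single thread, derive a contradiction with minimality of $\tau$ rather than perform the swap. That argument is not a formality --- it is where the delay and reordering penalties in $\mmof{\tau}$ earn their keep, and it is the part of the lemma that does not follow from commutation alone.
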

We note that a similar result has been shown to hold for TSO~\cite{BMM11}. 
The currrent setting requires a more subtle cost function on computations.

\vspace{0.2cm}
\begin{corollary}
\label{local-hb-dependence}
In a minimal violation $\tau = \tau_1 \cdot \anaction \cdot \tau_2 \cdot \anotheraction \cdot \tau_3\in \psocompof{\aprogram}$ with $\threadof{\anaction} = \threadof{\anotheraction}$ we have $\anaction \hb^+ \anotheraction$ through $\tau_2$.
\end{corollary}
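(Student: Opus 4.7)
The plan is to derive this corollary as a direct consequence of the Dichotomy Lemma~\ref{dichotomy}, applied to the decomposition $\tau = \tau_1 \cdot \anaction \cdot \tau_2 \cdot \anotheraction \cdot \tau_3$. The lemma offers exactly two alternatives: either $\anaction \hb^+ \anotheraction$ through $\tau_2$, which is precisely the conclusion I need, or there exists a rearrangement $\tau' = \tau_1 \cdot \tau_{21} \cdot \anotheraction \cdot \anaction \cdot \tau_{22} \cdot \tau_3 \in \psocompof{\aprogram}$ with $\traceof{\tau} = \traceof{\tau'}$ and $\projection{\tau}{\athread} = \projection{\tau'}{\athread}$ for every thread $\athread$. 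It therefore suffices to rule out the second alternative under the standing hypothesis $\threadof{\anaction} = \threadof{\anotheraction}$.

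To exclude the second alternative, I would set $\athread := \threadof{\anaction} = \threadof{\anotheraction}$ and compare the projections $\projection{\tau}{\athread}$ and $\projection{\tau'}{\athread}$ directly. Both $\anaction$ and $\anotheraction$ belong to thread $\athread$, so they survive the projection, and their relative order within each projection is inherited from the surrounding sequence. In $\projection{\tau}{\athread}$ the action $\anaction$ sits before $\anotheraction$, whereas in $\projection{\tau'}{\athread}$ it has been pushed past $\anotheraction$. This contradicts the equality of per-thread projections guaranteed by Lemma~\ref{dichotomy}(ii), so only alternative (i) can hold, giving $\anaction \hb^+ \anotheraction$ through $\tau_2$ as required.

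No step in this short argument is combinatorially demanding: the heavy lifting sits inside Lemma~\ref{dichotomy} itself, where the swap of $\anaction$ past $\anotheraction$ must be justified while keeping the resulting sequence in $\psocompof{\aprogram}$, preserving the full happens-before trace, and respecting minimality of the cost $\mmof{\cdot}$. The corollary is essentially a packaging of the dichotomy for the frequent situation where the two actions under examination live in the same thread, which makes the swap alternative impossible on purely syntactic grounds and leaves a happens-before-through chain as the only possibility.
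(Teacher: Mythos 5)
Your proposal is correct and matches the paper's own argument: the paper likewise invokes Lemma~\ref{dichotomy} and observes that alternative (ii) would swap $\anaction$ and $\anotheraction$ while claiming to preserve per-thread computations, which is impossible when both actions belong to the same thread. Your explicit comparison of the projections $\projection{\tau}{\athread}$ and $\projection{\tau'}{\athread}$ just spells out the contradiction the paper states in one line.
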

The corollary is particularly interesting in the setting where action $\anotheraction$ has overtaken action $\anaction$. 
In this case, it states that the overtake was actually required to execute the actions in the following sense. 
The intermediary actions form a happens-before-through chain that prevents $\anotheraction$ from being executed before $\anaction$. 
The existence of this chain renders formally the intuition that minimal violations do not contain unnecessary delays.

To see the corollary, assume there was no happens-before-through chain. 
Lemma~\ref{dichotomy} would allow us to swap the actions $\anaction$ and $\anotheraction$ while preserving the per-thread computations --- in contradiction to the fact that the actions stem from 
the same thread.

\begin{proof}[of Proposition~\ref{hb-cycle-existence}]
Since $\anotheraction$ overtakes $\anaction$, they are from the same thread and thus program-order dependent.  
Furthermore, $\anotheraction$ is a store or a fence. 
If $\anotheraction \po^+ \anaction$, it is immediate to complete the cycle stated in (i): Corollary~\ref{local-hb-dependence} yields $\anaction \hb^+ \anotheraction$ through $\tau_2$.
If $\anaction \po^+ \anotheraction$, we find the issue action $\isu_\anotheraction$ in $\tau_1$, say $\tau_1 = \tau_{1a} \cdot \isu_\anotheraction \cdot \tau_{1b}$. From Corollary~\ref{local-hb-dependence}, $\isu_\anotheraction \hb^+ \anaction$ through $\tau_{1b}$, as required in~(ii). \qed
\end{proof}

\section{Locality}
\begin{theorem}[Locality]
	\label{locality}
	A concurrent program is not robust if and only if there is a violating computation where exactly one thread delays actions.
\end{theorem}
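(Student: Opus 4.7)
The ``if'' direction is immediate: a violating computation witnesses non-robustness by the Shasha--Snir lemma. For the ``only if'' direction the plan is to pick a minimal violation $\tau \in \psocompof{\aprogram}$ (minimal with respect to the lexicographic cost $\mm$ of Section~\ref{Section:minimal-violations}) and argue by contradiction that only one thread of $\tau$ delays actions. So assume two distinct threads $\athread_1$ and $\athread_2$ each delay some store or fence in $\tau$. The goal is to construct another computation $\tau' \in \psocompof{\aprogram}$ whose trace is still cyclic but whose cost is strictly smaller than $\mmof{\tau}$, contradicting minimality.

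To locate the target of the construction, I would single out the delayed store/fence $b$ of $\athread_2$ that occurs latest in $\tau$ among the delayed actions not belonging to $\athread_1$, and write $\tau = \sigma_1 \cdot \isu_b \cdot \sigma_2 \cdot b \cdot \sigma_3$ with $\sigma_2$ non-empty on $\athread_2$. The central move is to push $b$ one step rightward at a time, using Lemma~\ref{dichotomy} on the pair $(c,b)$ for the action $c$ that immediately follows $b$'s current position: if the swap is granted, it preserves $\psocompof{\aprogram}$-membership and the trace $\traceof{\tau}$ (so cyclicity is kept) while strictly decreasing $\delaysof{b}$; iterating eliminates $b$'s delay outright. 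If at some point the swap is blocked, Corollary~\ref{local-hb-dependence} combined with Proposition~\ref{hb-cycle-existence} forces $c$ to be of $\athread_2$ and to participate, with $b$, in a happens-before cycle entirely internal to $\athread_2$'s delayed segment. In that blocked case, I would instead apply the symmetric sliding on the latest delayed action $a$ of $\athread_1$: Proposition~\ref{hb-cycle-existence} already supplies a cycle supporting $a$'s delay, and Lemma~\ref{dichotomy} allows all of $\athread_2$'s issues and stores to be reshuffled out of $a$'s delay window without touching $\athread_1$'s happens-before chain. The net effect is to reshape $\tau$ into a $\tau'$ in which $\athread_2$ performs no delays, strictly dropping $\delaysof{\tau'} < \delaysof{\tau}$ while keeping the cycle around $a$ intact.

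The main obstacle is to ensure that these iterated swaps simultaneously (a) remain legal computations of $\psocompof{\aprogram}$, (b) preserve a cyclic happens-before trace so that $\tau'$ stays a violation, and (c) strictly decrease the cost $\mm$. Points (a) and (c) are handled uniformly by Lemma~\ref{dichotomy}, which preserves per-thread projections and hence executability, and by the fact that successful swaps either shrink a delay or reorder stores. The delicate point is (b): the happens-before cycle present in $\tau$ might in principle pass through actions of both $\athread_1$ and $\athread_2$ that one wishes to move, so the argument must show that whenever a swap is blocked the blocking chain can be grafted onto the $\athread_1$-cycle guaranteed by Proposition~\ref{hb-cycle-existence}, yielding a reduced cycle supported entirely by delays in $\athread_1$. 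Establishing this rewiring carefully, rather than breaking the cycle, is where the bulk of the work lies.
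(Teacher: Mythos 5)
Your overall frame --- take a cost-minimal violation, assume two threads delay, and derive a cheaper violation --- is also the paper's frame, but the proposal stops short of the one step that carries the proof, and the intermediate claims you use to set that step up do not hold. First, sliding the delayed action $b$ of $\athread_2$ via Lemma~\ref{dichotomy} cannot ``eliminate $b$'s delay outright'': by Corollary~\ref{local-hb-dependence}, in a minimal violation there is \emph{always} a happens-before-through chain between $b$ and the last same-thread action it overtook, so case~(i) of the dichotomy holds and the swap is blocked from the very first attempt (this is exactly the content of Proposition~\ref{hb-cycle-existence}: every delay is already supported by a cycle). Second, that blocking cycle is not ``entirely internal to $\athread_2$'s delayed segment'': it closes through $\src$/$\sto$/$\cf$ edges contributed by \emph{other} threads, typically including $\athread_1$. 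Third --- and this is the genuine gap --- you assert that $\athread_2$'s issues and stores can be reshuffled out of $\athread_1$'s delay window ``without touching $\athread_1$'s happens-before chain''. That is precisely what can fail: the chain $x \hb^+ \st_x$ certifying $\athread_1$'s cycle may pass through $\athread_2$'s delayed store $\st_y$, and removing $\st_y$'s delay naively breaks the cycle. You correctly flag this rewiring as ``where the bulk of the work lies'', but the proposal contains no argument for it, and it is the theorem.

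For comparison, the paper proceeds by (1) choosing $y$ as the \emph{last issue-or-load overtaken by a store} and $x$ as the last such from a different thread, and ruling out two of the three possible interleavings of $x,\st_x,y,\st_y$ using the rear lemma (everything after a closed cycle consists of delayed stores and fences, hence cannot contain a later overtaken issue or load); (2) in the remaining nested form $\tau_1\cdot x\cdot\tau_2\cdot y\cdot\tau_3\cdot\st_y\cdot\tau_4\cdot\st_x\cdot\tau_5$, normalizing $\tau_3$ so that everything between $y$ and $\st_y$ is hb-dependent on $y$ (Lemma~\ref{hb-dependent-part}), locating the \emph{first} access $\acc$ to $\addrof{\st_y}$ there, moving $\st_y$ directly after $\acc$, and then moving it back over $y$; and (3) showing the $x$-cycle survives because any passage of the chain through $\st_y$ via $\cfst$ must enter at $\acc$, so it can be rerouted as $x \hb^+ \acc \cfst \st \sto^* \st_x$. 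Without an argument of this kind --- exhibiting a concrete access through which the broken chain can be re-grafted --- the cost reduction you aim for is not justified.
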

The difficult task is to show completeness.
We can show that we only need to consider minimal violations of the form $\tau = \tau_1 \cdot x \cdot \tau_2 \cdot y \cdot \tau_3 \cdot \st_y \cdot \tau_4 \cdot \st_x \cdot \tau_5$ where $x$ is overtaken by $\st_x$ and $y$ is overtaken by $\st_y$ with $x$ and $y$ being from different threads.
There are two happens-before cycles: one between $x$ and $\st_x$, the other between $y$ and $\st_y$.
The goal is to move $\st_y$ back over $y$ to save one delay while preserving the computation's trace, or at least get another valid computation with a slightly different trace that also contains a happens-before cycle.
This will be the cycle between $x$ and $\st_x$.
Therefore, we get a violation with less delays than the original one.
This contradicts the initial assumption that the chosen computation was a minimal violation.

With locality at hand, we can constrain the shape of minimal violations. 
A similar normal form was presented for TSO in~\cite{BDM13}, and we can adapt it to the current setting with minimal changes.
\vspace{0.3cm}
\begin{proposition}[Witnesses~\cite{BDM13}]\label{witnesses}
Program $\aprogram$ is robust if and only if there is no minimal violation
$\tau = \tau_1 \cdot \isu_\st \cdot \tau_2 \cdot a \cdot \tau_3 \cdot \st \cdot \tau_4\in \psocompof{\aprogram}$, called \emph{witness computation},  that satisfies the following requirements.
\begin{enumerate}\setlength\itemindent{12pt}
	\item[(W1)] Only $\athread_A:=\threadof{\st} = \threadof{a}$ delays actions.
	\item[(W2)] $\projection{\tau_3}{\athread_A} = \varepsilon$.
	\item[(W3)] $\tau_1$ and $\tau_2 \cdot \anaction \cdot \tau_3$ do not contain delayed actions.
	\item[(W4)] For every $\anotheraction$ in $\tau_3 \cdot \st$ we have $\anaction \hb^+ \anotheraction$ through the intermediary actions.
	\item[(W5)] $\tau_4$ only contains delayed stores and fences of $\athread_A$.
\end{enumerate}
\end{proposition}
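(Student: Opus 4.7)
The "only-if" direction is immediate: a witness computation is by definition a violation, so its existence implies non-robustness. For the converse, I start from a non-robust program and apply Locality (Theorem~\ref{locality}) to obtain a minimal violation $\tau$ in which all delays are concentrated in a single thread $\athread_A$, which gives (W1). The plan is to fix a suitable split of $\tau$ and then repair the remaining conditions using the Dichotomy lemma (Lemma~\ref{dichotomy}). The crucial observation that makes this feasible is that every Dichotomy swap preserves $\projection{\tau}{\athread}$ for all threads $\athread$, and therefore preserves $\mmof{\tau}$; so minimality is maintained throughout.

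The first step is choosing the pivot $\st$. I would take $\st$ to be the delayed $\athread_A$-action whose memory effect is earliest in $\tau$ among all delayed store/fence actions. With this choice, any action in $\tau_1$ (before $\isu_\st$) or in the intermediary segment between $\isu_\st$ and $\st$ that were itself delayed would necessarily complete before $\st$, contradicting the choice of $\st$ as the first delayed completion. This yields (W3) for free. The second step is to locate $a$ and the split $\tau_2 \cdot a \cdot \tau_3$. Since $\tau$ is a violation and $\st$ has been overtaken by actions between $\isu_\st$ and $\st$, Proposition~\ref{hb-cycle-existence}(ii) provides an $\hb^+$-chain from $\isu_\st$ to $\st$ through these intermediary actions. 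I would walk this chain from $\st$ backward and set $a$ to be the leftmost action reached, so that, by construction, every action of $\tau_3 \cdot \st$ is hb-through reachable from $a$; this establishes (W4).

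The remaining work is conditions (W2) and (W5). For (W2), any $\athread_A$-action sitting in $\tau_3$ must be pushed back past $a$ into $\tau_2$. For each such action $b$, Dichotomy either gives the desired swap, or forces an hb-through chain from $a$ to $b$, from which Corollary~\ref{local-hb-dependence} supplies an alternative route to $\st$ that bypasses $b$ — allowing the chain underpinning (W4) to be re-routed and $b$ to be removed from $\tau_3$. For (W5) I would analogously apply Dichotomy to commute out of $\tau_4$ every non-$\athread_A$ action and every non-delayed $\athread_A$ action whose presence after $\st$ is not forced by happens-before. The main obstacle is that these commutations interact non-trivially with the choice of $a$: moving an $\athread_A$-action out of $\tau_3$ could in principle disrupt the chain witnessing (W4), and repeated commutations must be shown to terminate in a configuration satisfying all of (W1)--(W5) simultaneously. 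Verifying this rests precisely on the interplay between Dichotomy and Local HB Dependence: the former provides the swap whenever no hb-through chain obstructs it, and the latter guarantees that $\athread_A$-obstructions in such chains can always be short-circuited by the hb-through path between $\athread_A$-actions that a minimal violation is guaranteed to carry.
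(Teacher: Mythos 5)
Your overall strategy is the right one (Locality gives (W1); the remaining conditions are enforced by minimality-preserving rearrangements), but the construction of the split point $\anaction$ and the verification of (W2), (W4), (W5) contain genuine gaps. The paper takes $\st$ to be the \emph{first} delayed action (in issue order, arguing it must be a store) and $\anaction$ to be the \emph{last} action overtaken by $\st$; this gives (W2) by definition. A separate lemma on the shape of the last overtaken action is then needed to show that $\anaction$ is a load or a store issued in $\tau_2$ --- not an issue, a local action, or a fence issued before $\isu_\st$ --- which is what yields $\st \po^+ \anaction$ and, via Corollary~\ref{local-hb-dependence}, the cycle $\st \po^+ \anaction \hb^+ \st$ on which everything else rests. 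Your recipe (``walk the $\hb^+$-chain from $\isu_\st$ to $\st$ backward and take the leftmost action'') does not produce this: an action lying on \emph{some} hb-through chain into $\st$ says nothing about the \emph{other} intermediary actions, so (W4) does not follow ``by construction''; (W4) genuinely requires the rearrangement of Lemma~\ref{hb-dependent-part}, which moves all hb-independent successors to the left of $\anaction$. Similarly, (W5) is obtained from Lemma~\ref{rear-part-stores}, which \emph{deletes} the offending actions after the cycle (reducing length) and reorders the survivors into program order (reducing reorders), both contradicting minimality; a Dichotomy-commutation loop as you sketch is not shown to terminate, and in any case the Rear lemma's hypothesis is precisely the cycle you have not yet established.

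The claim that (W3) comes ``for free'' from choosing $\st$ as the delayed action with the earliest memory effect also does not survive scrutiny. Under that choice you lose the property that nothing issued before $\isu_\st$ is delayed (a store issued in $\tau_1$ may land in $\tau_4$), which the paper uses both to describe Phase~1 and to rule out that $\anaction$ is a fence issued in $\tau_1$. Under the paper's choice ($\st$ first in issue order), showing that $\tau_2 \cdot \anaction$ contains no delayed landings is the most delicate part of the whole proof: one assumes a first such delayed action $x$ overtaking some $\anotheraction$, shows $x$ must be a store and $\anotheraction$ a load or store, derives the cycle $\anotheraction \hb^+ x \po^+ \anotheraction$, and then applies the Rear lemma to conclude $x \po^+ \st$ --- contradicting that $x$ is issued after $\isu_\st$. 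None of this is visible in your plan, so the hardest step is currently missing rather than merely deferred.
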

\vspace{0.2cm}
Following~\cite{BDM13}, thread $\athread_A$ is referred to as the \emph{attacker}, the remaining threads are called \emph{helpers.}
The computation in our running example is a witness: 
\begin{align*}
\tau'=\isu_a\cdot\isu_b\cdot b\cdot \isu_c\cdot c\cdot d\cdot e\cdot f\cdot a\ .
\end{align*}
Here, $\isu_{\st}$ is $\isu_a$ and the overtaken action is $c$.
Part $\tau_3$ is $d\cdot e\cdot f$.

\section{Singularity}
Our second main result shows that it is sufficient to delay only a single store action. 
The theorem only holds in the absence of lightweight fence commands, $\texttt{scfence}$ is still allowed.
We stress that the theorem is optimal in the sense that it characterizes the least relaxation required to violate sequential consistency. 
A program without delayed stores is always robust.
\begin{theorem}[Singularity]\label{singularity}
Consider a program without $\texttt{fence}$. 
It is not robust if and only if there is a violation with exactly one delayed store.
\end{theorem}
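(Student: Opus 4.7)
The easy direction is immediate: any single violating computation, whatever its number of delayed stores, demonstrates non-robustness. For the nontrivial direction my plan is to start from an arbitrary violation, normalize it using the machinery already in place, and then simply truncate the tail. First apply Locality (Theorem~\ref{locality}) to reduce to a violation in which only a single attacker thread performs delays. Then invoke the Witness proposition (Proposition~\ref{witnesses}) to put this minimal violation into the canonical shape $\tau = \tau_1 \cdot \isu_\st \cdot \tau_2 \cdot a \cdot \tau_3 \cdot \st \cdot \tau_4$. Because the program contains no \texttt{fence} instructions and \texttt{scfence} can never sit as a delayed action (it demands empty buffers), the tail $\tau_4$ consists purely of delayed stores of the attacker.

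The main move is to truncate: set $\tau' := \tau_1 \cdot \isu_\st \cdot \tau_2 \cdot a \cdot \tau_3 \cdot \st$, the prefix of $\tau$ that drops $\tau_4$. Since $\tau'$ is a prefix of a valid execution, it is itself a valid computation in $\psocompof{\aprogram}$. By the Remark after Proposition~\ref{hb-cycle-existence}, the happens-before cycle that witnesses the violation of $\tau$ lies entirely among the actions of $\tau_1 \cdot \isu_\st \cdot \tau_2 \cdot a \cdot \tau_3 \cdot \st$, which are precisely the actions of $\tau'$. Each $\po$, $\sto$, $\src$ or $\cf$ edge of that cycle relates two actions that remain in $\tau'$ and is determined identically in $\traceof{\tau'}$ and $\traceof{\tau}$; hence $\traceof{\tau'}$ is cyclic and $\tau'$ is a violation.

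It remains to count the delayed stores of $\tau'$. By the cost-function convention a store counts as delayed only when both its issue $\isu_\anaction$ and its execution $\anaction$ appear in the computation, with intervening actions between them. The store $\st$ plainly qualifies, and no other store does: every store originally in $\tau_4$ still has its issue somewhere in $\tau_1 \cdot \isu_\st \cdot \tau_2$ but no longer has a matching execution in $\tau'$ and therefore contributes nothing to the delay count, while every other store is either a helper store (not delayed, by Locality) or an attacker store whose issue and execution were already adjacent by witness clause~(W3). Hence $\st$ is the unique delayed store of $\tau'$. The main subtlety, and the place where the no-\texttt{fence} hypothesis is essential, is the very first step: without lightweight fences $\tau_4$ cannot harbour a delayed \texttt{fence} that would tie $\st$'s delay to that of further buffered stores, whereas with such fences present one can construct violations (like the Dekker-plus-store example sketched earlier) in which no single-store delay ever suffices.
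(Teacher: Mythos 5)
Your first steps (locality, the witness normal form, and the observation that without \texttt{fence} the tail $\tau_4$ consists only of delayed attacker stores) match the paper. The truncation step, however, has a genuine gap. The sequence $\tau' = \tau_1 \cdot \isu_\st \cdot \tau_2 \cdot a \cdot \tau_3 \cdot \st$ is in general \emph{not} a computation in $\psocompof{\aprogram}$: every store in $\tau_4$ was issued somewhere in $\tau_1 \cdot \isu_\st \cdot \tau_2$ (its issue is an attacker action and must precede $a$, the attacker's last action), so after truncation those issues leave non-empty buffers in the final state, and the paper's definition of $\psocompof{\aprogram}$ --- and hence of a violation --- explicitly requires all buffers to be empty at the end. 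You cannot repair this by deleting the dangling issues, because the attacker's control flow (including $a$ itself) passes through those store instructions in program order; you cannot repair it by flushing the buffers at the end, since that just reinstates $\tau_4$ and its extra delayed stores; and you cannot land those stores early, right after their issues, because by Corollary~\ref{local-hb-dependence} each delay in a minimal violation is forced by a happens-before-through chain, so un-delaying them changes the trace and possibly the executability.

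This is exactly the difficulty the paper's proof is built to overcome: it does not truncate, but shows via Proposition~\ref{no-move-of-two-stores} that a minimal violation can never contain two adjacent delayed stores of the same thread, so the block $\st \cdot \tau_4$ of adjacent delayed attacker stores forces $\tau_4 = \varepsilon$ outright. That proposition is the combinatorial heart of singularity, and it is also where the no-\texttt{fence} hypothesis actually bites --- not, as you suggest, merely in keeping delayed fences out of $\tau_4$. As written, your argument bypasses this core step and replaces it with one that does not hold.
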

The proof is short and derives a contradiction to a strong combinatorial property.
The reasoning is as follows. 
If the program has a violation with one delayed store, it is not robust. 
If the program is not robust, by Proposition~\ref{witnesses} there is a minimal violation that is a witness computation as defined in the previous section.
In this computation, $\tau_4$ only consists of delayed stores and fences of the attacker thread.
We assume the program does not contain $\texttt{fence}$ commands.
This means $\tau_4$ only consists of delayed stores of the attacker.
Now, if more than one store was delayed, we would have a contradiction to the following Proposition~\ref{no-move-of-two-stores} that came as a surprise to us.
A minimal violation will never place two delayed stores next to each other. 
This already concludes the argumentation.
\begin{proposition}[Two Stores]
	\label{no-move-of-two-stores}
	In the absence of $\texttt{fence}$, there is no minimal violation $\tau_1 \cdot \st_1 \cdot \st_2 \cdot \tau_2$ with $\threadof{\st_1} = \threadof{\st_2}$.
\end{proposition}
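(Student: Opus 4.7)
The plan is to argue by contradiction. Suppose $\tau = \tau_1 \cdot \st_1 \cdot \st_2 \cdot \tau_2$ is a minimal violation with $\threadof{\st_1} = \threadof{\st_2} =: \athread$; I will construct a strictly cheaper violation. The first step is to pin down the structural layout. By Corollary~\ref{local-hb-dependence} applied to the empty chain between the adjacent $\st_1, \st_2$, we have $\st_1 \hb \st_2$ as a direct edge, and since both actions are stores it must be either $\po$ or $\sto$. In the $\sto$-case the two stores share an address and per-address FIFO in the buffer semantics forces the one issued first to exit first; the $\po$-case gives the same conclusion by definition. Either way $\isu_{\st_1}\po\isu_{\st_2}$, so both issues live in $\tau_1$ and I can decompose $\tau_1 = \alpha \cdot \isu_{\st_1} \cdot \beta \cdot \isu_{\st_2} \cdot \gamma$. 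From this decomposition one also checks that both stores are in fact delayed.

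The second step extracts a quantitative ingredient. The store $\st_2$ overtakes the $\athread$-action $\st_1$, and together with $\st_1 \po \st_2$ this lets Proposition~\ref{hb-cycle-existence}(ii) produce a happens-before-through chain $\isu_{\st_2} \hb^+ \st_1$ whose intermediaries lie in $\gamma$. A case analysis on the first edge of this chain rules out any non-$\athread$ target: $\sto$ from $\isu_{\st_2}$ into $\gamma$ points the wrong way since every store in $\gamma$ reaches memory before $\st_2$ does; $\cf$ requires a load as source; $\src$ would demand a non-$\athread$ reader seeing $\st_2$'s value while $\st_2$ is still buffered, which is impossible; and $\po^+$ is thread-local. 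Consequently the first intermediate lies in $\gamma\cap\athread$, so $\lvert\gamma\rvert_{\athread}\geq 1$.

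The third step constructs the cheaper candidate by un-delaying $\st_2$. If $\st_1, \st_2$ address different locations, take
\begin{align*}
\tau^{*} = \alpha \cdot \isu_{\st_1} \cdot \beta \cdot \isu_{\st_2} \cdot \st_2 \cdot \gamma \cdot \st_1 \cdot \tau_2,
\end{align*}
buffer-legal because the two per-address buffers are independent; if they share an address, move $\st_1$ together with $\st_2$,
\begin{align*}
\tau^{*} = \alpha \cdot \isu_{\st_1} \cdot \beta \cdot \isu_{\st_2} \cdot \st_1 \cdot \st_2 \cdot \gamma \cdot \tau_2,
\end{align*}
so that per-address FIFO is respected. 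A direct count shows $\delaysof{\tau^{*}} < \delaysof{\tau}$ with a gap of at least $\lvert\gamma\rvert_{\athread}\geq 1$: in the first variant $\st_2$ loses $\lvert\gamma\rvert_{\athread}+1$ delays while $\st_1$ picks up only $1$, and in the second both stores lose $\lvert\gamma\rvert_{\athread}$ delays. The length is unchanged and any change in the reorder count is dominated lexicographically by the delay decrease.

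The main obstacle is showing that $\tau^{*}$ is still a violation. Moving $\st_1, \st_2$ ahead of $\gamma$ can flip $\sto$-edges between them and later same-address stores in $\gamma$, and can re-route $\src$-edges for non-$\athread$ loads in $\gamma$ from the affected addresses, so the trace changes in general. My plan here is to slide $\st_2$ leftward one $\gamma$-action at a time by repeated application of the Dichotomy Lemma~\ref{dichotomy}: every swap that is not blocked by an hb-through dependency preserves the trace literally, and any blocked swap is flagged by a fresh hb-through chain that, composed with the chain from the second step, yields a new hb-cycle through $\st_1$ in the resulting trace. Once this bookkeeping is carried out, $\tau^{*}$ is a strictly cheaper violation, contradicting the minimality of $\tau$.
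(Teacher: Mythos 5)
Your setup (contradiction, $\st_1 \po^+ \st_2$, the chain $\isu_{\st_2} \hb^+ \st_1$ through the segment $\gamma$ after $\isu_{\st_2}$, and the observation that the real work is preserving violation-hood after un-delaying a store) matches the paper's strategy in spirit, but the step where you discharge that obstacle does not go through, and it is precisely the step the paper's proof is built around. Sliding $\st_2$ leftward by iterated applications of Lemma~\ref{dichotomy} gives you, at each blocked swap, only a chain $c \hb^+ \st_2$ for some $c$ in $\gamma$; to close a cycle you would also need $\isu_{\st_2} \hb^+ c$, and nothing guarantees that $c$ lies on (or is reachable from) the chain produced in your second step. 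A concrete failure: $c$ may be a store of another thread to $\addrof{\st_2}$, so $c \sto \st_2$ blocks the swap, yet $c$ need not be happens-before-reachable from $\isu_{\st_2}$; placing $\st_2$ before $c$ then flips the store order without yielding any cycle. In your same-address variant the situation is worse, because moving $\st_1$ in front of $\gamma$ destroys the very $\cfst$ edge into $\st_1$ that terminates the chain from step two, so the one cycle you have exhibited is gone. You also do not address executability: moving $\st_2$ ahead of $\gamma$ changes which store the loads of $\addrof{\st_2}$ in $\gamma$ read from, which can alter register values and break later assertions, so $\tau^{*}$ need not even lie in $\psocompof{\aprogram}$.

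The paper resolves all of this by moving the \emph{other} store and by pinning down where the cycle lives before anything is moved. It first normalizes $\tau_{1b}$ with Lemma~\ref{hb-dependent-part}, extracts from Corollary~\ref{local-hb-dependence} an access $\acc'$ with $\isu_{\st_2} \hb^+ \acc' \cfst \st_1$ (or $\cfst \st' \po^+ \st_1$), passes to the \emph{first} access $\acc$ to $\addrof{\acc'}$, and rearranges so that $\acc$ sits directly before the thread's trailing stores. The auxiliary Lemma~\ref{allstores} then shows every store in $(\projection{\tau_{1b2}''}{\athread}) \cdot \st_1 \cdot \st_2$ targets $\addrof{\acc}$, so the cycle $\isu_{\st_2} \hb^+ \acc \cfst \st_2$ is anchored at $\acc$ and $\st_2$, neither of which moves; only $\st_1$ is pulled back before $\isu_{\st_2}$, and executability holds because nothing before $\acc$ accesses the relevant address and no assertion follows $\acc$. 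Without an analogue of Lemma~\ref{allstores} and the ``first access'' choice, your construction has no cycle that survives the move, so the contradiction is not established.
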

\section{Instrumentation}\label{Section:Instrumentation}
To check robustness, we have to look for minimal violations. 
Proposition~\ref{witnesses} reduces the search space as we only have to consider minimal violations in witness form.
If the program does not use the $\texttt{fence}$ instruction, then also singularity holds and $\tau_4$ will be empty in all witness computations. 
Our instrumentation is an adaptation of~\cite{BDM13}.\\[0.4cm]
{\bf Attacks}\quad 
An \emph{attack} is a triple $A = (\attacker, \stinst, \lastinst)$, where $\attacker$ is a thread, $\stinst$ is a store instruction of $\attacker$, and $\lastinst$ is a store or load instruction of $\attacker$. 
Note that attacks are syntactic objects and there is a quadratic number of them. 
An attack $A$ is \emph{feasible} if there is a witness computation where $\attacker$ plays the role of the attacker, $\st$ is an instance of $\stinst$, and $\anaction$ is an instance of $\lastinst$. 
With Proposition~\ref{witnesses}, the program is robust if and only if no attack is feasible. 
Given an attack, we now develop an instrumentation that finds a witness for it.

The witness computation of interest has four phases:
\begin{enumerate}
	\item In $\tau_1$, all issued stores are immediately written to the memory. Eventually, the attacker decides to delay a store.
	\item In $\tau_2$, further actions of the attacker happen either without delay or they are delayed until $\tau_4$. The helpers execute arbitrary actions. At some point, the attacker does its last normal action $a$, which is a load or non-delayed store.
	\item In $\tau_3$, the helpers execute only actions that are happens-before-dependent on $a$ while the attacker pauses.
	\item In $\tau_4$, only the attacker's delayed actions get executed.
\end{enumerate}

We use an observation from \cite{BDM13} that limits the information we have to track about the delayed stores.
The argumentation is as follows. 
If there are delayed stores to an address, the attacker will load the last value that was put into the corresponding buffer. 
The helpers will load the last value that was stored in memory. 
Combined with Property~(W3) of witness computations --- $\tau_2 \cdot \anaction\cdot \tau_3$ does not contain delayed actions --- the content of the buffers is not needed. 
All we have to track is two values per address: The current value in memory and the value of the last buffered store, if any. 
When the attacker executes a store that is delayed, it will set the buffered value. 
When the attacker executes a store without delay or the helpers execute a store, it updates the current value in memory. 
Loads from helpers will always read the current value from memory while loads from the attacker will prefer the buffered one, if it is set.


Recall that the values in $\DOM$ act as addresses. 
We extend $\DOM$ as follows: For each $x \in \DOM$ we add auxiliary addresses $\auxdelayed{x}$ and $\auxaccesslevel{x}$. 
The addresses $\auxdelayed{x}$ hold the values of the last buffered stores. 
The addresses $\auxaccesslevel{x}$ are used to track the happens-before-dependencies required by (W4). Here, we rely on the mechanism from~\cite{BDM13}.
Furthermore, we add the auxiliary addresses $\auxhb$ and $\auxsuc$. Flag $\auxhb$ will tell the helpers that Phase~3 has started and they must execute hb-dependent actions. 
Flag $\auxsuc$ indicates a feasible attack.\\[0.2cm]
{\bf Instrumentation of the Attacker for Locality}\quad 
The attacker operates in three modes. 
Initially, instructions are executed under SC (Phase~1). 
Upon execution of $\stinst$, the attacker can decide to delay that store. 
If the store is to address $x$, we save the stored value to $\auxdelayed{x}$ and the address $x$ to an auxiliary register $\auxaddress$. 
The control flow changes to a modified copy of the code for Phase~2.

During Phase~2, when the attacker has to load from address $y$, it will first check whether $\auxdelayed{y}$ is set. 
If so, it reads that value, otherwise it reads the value of $y$. 
A store to address $y$ can either directly go to memory location $y$ or it is delayed and stored to $\auxdelayed{y}$. 
If there already is a buffered value, then the store cannot be done on memory and has to update the buffered value --- due to the FIFO property of buffers. 
Additionally, we have to delay all stores if there was a fence that had to be delayed. 
A fence has to be delayed if there is a buffered value for at least one of its addresses. 
We use an additional register $\auxfence$ as a flag indicating that a fence has been delayed. 
An $\scfence$ action is not permitted in Phase~2 and will lead to a deadlock.

Upon execution of $\lastinst$, the attacker can decide that this is its last action. 
If it is a load, we make sure that there is no buffered value for that address. 
Otherwise, there is no possibility for hb-dependent actions in $\tau_3$. 
We set the $\auxhb$ flag and go to a special wait label. 
If $\lastinst$ is a store, then we have to make sure that it can be stored directly, i.e., there is no delayed fence and no delayed store for that address. 
We execute the store, set the $\auxhb$ flag, and go to the wait label.

The third mode is in the wait label. 
The attacker waits until a helper thread has completed the happens-before chain, in which case it sets the success flag.

Compared to~\cite{BDM13} for TSO, what is new is the handling of non-delayed stores and fences, and the optimized detection of violations by the attacker. \\[0.3cm]
{\bf Optimized Instrumentation of the Attacker for Singularity}\quad 
When the program does not make use of the $\texttt{fence}$ instruction, then singularity holds. 
We can simplify the above instrumentation as follows. 
As only one store is delayed, we can use a single register $\auxdelayval$ instead of the auxiliary addresses $\auxdelayed{x}$ to save the corresponding value. 
Recall that the address is kept in $\auxaddress$.

\section{Discussion}
We have shown that locality and singularity hold for virtually all store-atomic consistency models (singularity holds in the absence of \texttt{fence} instructions). 
Based on this, we have shown how to reduce robustness to reachability under SC.
The study was motivated by our interest in PGAS clusters.
The immediate next question to ask is whether similar results could hold for non-store-atomic models like Power.
The answer is: We do not yet know.
There is, however, a number of proof principles for relaxed consistency models that came out of this investigation.
First, it seems that a dichotomy lemma is helpful:
Whenever there are two actions from the same thread, there already is a happens-before path between them (Corollary~\ref{local-hb-dependence}).
This, in turn, seems to hold whenever we minimize the cost of computations according to the number of delays plus the reorderings. 
We did not have this result in our work on Power~\cite{DM14} but one may be able to lift it.
Another attractive result that one may want to generalize is the delay of neighboring elements, which is forbidden under minimality (Proposition~\ref{no-move-of-two-stores}).
The ultimate goal would be to develop an understanding under which conditions a consistency model would satisfy a property like locality or singularity.

\newpage
\bibliographystyle{plain}
\bibliography{cited}

\newpage
\appendix
\section{Details on the Semantics}

Consider a program $\aprogram$ consisting of the threads $\THRD = \{t_1, \dots, t_n\}$.
Let $\LAB$ be the set of all labels that are declared in $\aprogram$. 
Assume each thread $t_i$ has the initial label $l_{0,i}$ and defines the registers $\overline{r_i}$. 
We use $\VAR := \DOM \cup \cup_{i \in [1,n]} \overline{r_i}$ to refer to the set of all variables, i.e., the set of all addresses in the global memory and all local registers.

The operational semantics defines a transition relation between states. 
A \emph{state} $s = (\pcconf, \valconf, \bufconf)$ consists of a program counter, a variable valuation, and some buffer content.
The program counter $\pcconf: \THRD \rightarrow \LAB$ tracks for each thread the label of the instruction to be executed next.
The variable valuation $\valconf: \VAR \rightarrow \DOM$ contains the value of each local register and the value of each address in the global memory. 
The buffer content $\bufconf = \bufconf_1 \cup \bufconf_2$ is $\bufconf_1: \THRD \times \DOM \rightarrow \DOM^*$ and $\bufconf_2: \THRD \rightarrow (\DOM^*)^*$. 
The first ones are the per-address buffers in each thread. 
The second are the all-addresses buffers.
They contain the stores (as mappings from addresses to values in $\DOM\times \DOM$) and the fences (as sequences of addresses in $\DOM^*$). 
The \emph{initial state} is defined by $s_0 := (\pcconf_0, \valconf_0, \bufconf_0)$ with $\pcconf_0(t_i) := l_{0,i}$ for all $t_i \in \THRD$, $\valconf_0(x) := 0$ for all $x \in \VAR$, and $\bufconf_1(t, a) := \varepsilon$, $\bufconf_2(t) := \varepsilon$ for all $t \in \THRD, a \in \DOM$.
Here, we use $\varepsilon$ for the empty word.

\begin{figure}
\begin{equation*}
\inferrule{
  instr = r \leftarrow mem[f_a(\overline{r_a})] \\
  a = f_a(\valconf(\overline{r_a})) \\
  \bufconf(t, a) = \beta \cdot v
}{
  (\pcconf, \valconf, \bufconf)
  \transition{(\athread, ld, a, v)}
  (\pcconf', \valconf[r := v], \bufconf)
}\text{~~early read 1}
\end{equation*}\vspace{0.5\baselineskip}
\begin{equation*}
\inferrule{
	instr = r \leftarrow mem[f_a(\overline{r_a})] \\
	a = f_a(\valconf(\overline{r_a})) \\
	\bufconf(t, a) = \varepsilon \\
	\projection{\bufconf(t)}{(a \leftarrow *)} = \beta \cdot (a \leftarrow v)
}{
(\pcconf, \valconf, \bufconf)
\transition{(\athread, ld, a, v)}
(\pcconf', \valconf[r := v], \bufconf)
}\text{~~early read 2}
\end{equation*}\vspace{0.5\baselineskip}
\begin{equation*}
\inferrule{
  instr = r \leftarrow mem[f_a(\overline{r_a})] \\
  a = f_a(\valconf(\overline{r_a})) \\
  \bufconf(t, a) = \varepsilon \\
  \projection{\bufconf(t)}{(a \leftarrow *)} = \varepsilon \\
  v = \valconf(a)
}{
  (\pcconf, \valconf, \bufconf)
  \transition{(\athread, ld, a, v)}
  (\pcconf', \valconf[r := v], \bufconf)
}\text{~~read memory}
\end{equation*}\vspace{0.5\baselineskip}
\begin{equation*}
\inferrule{
  instr = mem[f_a(\overline{r_a})] \leftarrow f_v(\overline{r_v}) \\
  a = f_a(\valconf(\overline{r_a})) \\
  v = f_v(\valconf(\overline{r_v}))
}{
  (\pcconf, \valconf, \bufconf)
  \transition{(\athread, \isu)}
  (\pcconf', \valconf, \bufconf[(t, a) := \bufconf(t, a) \cdot v])
}\text{~~issue store}
\end{equation*}\vspace{0.5\baselineskip}
\begin{equation*}
\inferrule{
	\bufconf(t, a) = v \cdot \beta \\
	\bufconf(t) = \gamma
}{
(\pcconf, \valconf, \bufconf)
\transition{\varepsilon}
(\pcconf, \valconf, \bufconf[(t, a) := \beta][t := \gamma \cdot (a \leftarrow v)])
}\text{~~advance buffer}
\end{equation*}\vspace{0.5\baselineskip}
\begin{equation*}
\inferrule{
  \bufconf(t) = (a \leftarrow v) \cdot \beta
}{
  (\pcconf, \valconf, \bufconf)
  \transition{(\athread, st, a, v)}
  (\pcconf, \valconf[a := v], \bufconf[t := \beta])
}\text{~~store to memory}
\end{equation*}\vspace{0.5\baselineskip}
\begin{equation*}
\inferrule{
  instr = scfence \\
  \bufconf(t) = \varepsilon \\
  \forall a \in DOM . \bufconf(t, a) = \varepsilon
}{
  (\pcconf, \valconf, \bufconf)
  \transition{(\athread, scfence)}
  (\pcconf', \valconf, \bufconf)
}\text{~~scfence}
\end{equation*}\vspace{0.5\baselineskip}
\begin{equation*}
\inferrule{
	instr = fence\ a_1 \dots a_n \\
	\forall 1 \leq i \leq n . \bufconf(t, a_i) = \varepsilon
}{
(\pcconf, \valconf, \bufconf)
\transition{(\athread, \isu)}
(\pcconf', \valconf, \bufconf[t := \bufconf(t) \cdot fence\ a_1 \dots a_n])
}\text{~~issue fence}
\end{equation*}\vspace{0.5\baselineskip}
\begin{equation*}
\inferrule{
  \bufconf(t) = (fence\ a_1 \dots a_n) \cdot \beta
}{
(\pcconf, \valconf, \bufconf)
\transition{(\athread, fence\ a_1 \dots a_n)}
(\pcconf', \valconf, \bufconf[t := \beta])
}\text{~~fence}
\end{equation*}\vspace{0.5\baselineskip}
\begin{equation*}
\inferrule{
  instr = r \leftarrow f(\overline{r})
}{
  (\pcconf, \valconf, \bufconf)
  \transition{(\athread, loc)}
  (\pcconf', \valconf[r := f(\valconf(\overline{r}))], \bufconf)
}\text{~~local assignment}
\end{equation*}\vspace{0.5\baselineskip}
\begin{equation*}
\inferrule{
  instr = assert\ f(\overline{r}) \\
  f(\valconf(\overline{r})) \neq 0
}{
  (\pcconf, \valconf, \bufconf)
  \transition{(\athread, loc)}
  (\pcconf', \valconf, \bufconf)
}\text{~~local assertion}
\end{equation*}
\caption{Transition rules from state $(\texttt{pc}, \texttt{val}, \texttt{buf})$ with $\texttt{pc}(t) = l$ and instruction $l: instr$; goto $l'$. 
Throughout the definition, we let $\texttt{pc}' := \texttt{pc}[\athread := l']$.
The first three rules make sure that a read will get the value of the latest buffered store from the own thread, if any, and otherwise the value from memory. 
We call a read from one of the buffers an \emph{early read}.  
The \emph{issue store} rule puts a store into the thread's per-address buffer. 
The \emph{advance buffer} rule moves the oldest store from a per-address buffer to the all-addresses buffer. 
The \emph{store to memory} rule deletes the oldest store from the all-addresses buffer and reflects its effect in the global memory. 
The \emph{scfence} rule allows for the execution of the scfence command if the current thread's buffers are empty. 
The \emph{issue fence} rule allows for the execution of the fence command if the current thread's per-address buffers for the given addresses are empty. 
The \emph{fence} rule allows us to track the order of executed stores and fences.\label{fig:transitions-pso}
}
\end{figure}

The transition rules are shown in Figure~\ref{fig:transitions-pso}. 
They implement the behavior explained above. 
We use $\projection{-}{-}$ for the projection of a sequence to a subset of its actions. 
Note that the transitions are labeled by actions from  
\begin{align*}
\ACT := \THRD \times ((\{ \st, \ld \} \times \DOM^2 ) \cup \{ \isu, \loc, \scfence \} \cup (\{ \fence \} \times \DOM^*)).
\end{align*}
Every action contains its executing thread. 
An action can be a store or load annotated with an address and a value, an issue, local, or scfence action, or a fence action with a list of addresses. 
The $\isu$ action denotes the issue of a store or fence action.  We use $\isu_{\anaction}$ to refer to the issue that action $\anaction$ corresponds to.

A \emph{computation} is a sequence of actions that describes a single execution of a program.
The set of computations of program $\aprogram$ is 
\begin{align*}
\psocompof{\aprogram} := \{ \tau\in\ACT^* ~|~ s_0 &\transition{\tau} s = (\pcconf, \valconf, \bufconf) \text{ with }\\
&\forall \athread \in \THRD ~.~ \forall a \in \DOM ~.~ \bufconf(\athread) = \varepsilon \land \bufconf(\athread, a) = \varepsilon \}\ .
\end{align*}
It contains all sequences of actions following the transition relation starting at the initial state and ending at a state with empty buffers.
The latter requirement does not restrict the program behavior. 
Buffers can always be emptied.

\FloatBarrier

\section{Happens-before Relations}

The formal definition for the relations of happens-before is as follows:
The program-order relation of a thread $\athread$ is obtained by projecting computation $\tau$ to the actions of this thread.
For stores and fences that are subject to delays, what matters is the moment they are issued.
The actual actions are also projected away, denoted by \emph{not delayed}:
\begin{align*}
\po^\athread\ :=\ \projection{\tau}{(\athread, \text{not delayed}, *, *)}.
\end{align*}
The \emph{program-order relation} of the program is then the union of the program orders relations of all threads, $\po\ :=\ \bigcup_{\athread\in\aprogram}\po^{\athread}$.

The store order is defined similarly. 
We first obtain the store order relation of each address $a$ by projecting $\tau$ to $(*, \st, a, *)$.
Then we define the \emph{store order} of the program to be the union of all $\sto^{a}$. 

The \emph{source relation} is defined via the partial source function $\mathrm{src}_\tau$ that maps the load actions in $\tau$ to the store actions in $\tau$. 
The function returns the store that a load reads its value.
If the function is undefined, the load reads the initial value. 
For $\tau = \alpha \cdot \ld \cdot \beta$ with $\threadof{\ld} = \athread$ and $\addrof{\ld} = a$, we define
\begin{align*}
\mathrm{src}_\tau(\ld) := \left\{
\begin{array}{ll}
	x, & \text{if } \projection{\alpha}{\isu_{((\athread, \st, a, *) \text{ in } \beta)}} = \gamma \cdot \isu_x\\
	y,\phantom{aaa} & \text{if }  \projection{\alpha}{\isu_{((\athread, \st, a, *) \text{ in } \beta)}} = \varepsilon ~\land~ \projection{\alpha}{(*, \st, a, *)} = \gamma \cdot y.
	\end{array}\right.
\end{align*}
The first case looks for the latest suitable store of the same thread that is already issued but not yet stored.
In this case, the load will perform an early read.
If there is no such store, the second case looks for the latest suitable store on the global memory. 
With this, we have $\mathrm{src}_\tau(\ld)\src \ld$, provided 
$\mathrm{src}_\tau(\ld)$ is defined.

The \emph{conflict relation} is derived from $\src$ and $\sto$.
We have $\ld\cf\st$, if there is a store $\st'$ with $\st'\src\ld$ and $\st'\sto\st$.
Moreover, we have $\ld\cf \st$ if both access the same address, there is no $\st'\src\ld$ and no $\st''\sto\st$.

\section{Missing Proofs}

To prove the dichotomy result, we first need to show a technical lemma.
It gives information about the shape of the last action that has been overtaken by a store.

\begin{lemma}
\label{last-overtaken-action}
In a minimal violation $\tau = \tau_1 \cdot \isu_{\st} \cdot \tau_2 \cdot \anaction \cdot \tau_3 \cdot \st \cdot \tau_4\in \psocompof{\aprogram}$ with $\threadof{\st} = \athread = \threadof{\anaction}$ and $\projection{\tau_3}{\athread} = \varepsilon$, action $\anaction$ is one of the following:
\begin{itemize}
\item a load of an address different from $\addrof{\st}$
\item a store
\item a \texttt{fence} instruction that was issued before $\isu_{\st}$.
\end{itemize}
\end{lemma}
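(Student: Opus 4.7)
The plan is to apply Corollary~\ref{local-hb-dependence} to the pair $\anaction, \st$ (same thread $\athread$, separated by $\tau_3$) to get a happens-before-through chain $\anaction \hb^+ \st$ whose intermediate nodes all lie in $\tau_3$. Since $\projection{\tau_3}{\athread} = \varepsilon$, none of those intermediate nodes is in $\athread$; and because $\po^+$ stays within a thread, the first edge leaving the trace-node of $\anaction$ must be either a direct $\po^+$-jump to $\st$ or an $\sto$-, $\src$-, or $\cf$-edge. A case analysis on the form of $\anaction$ will then rule out every action type not appearing among the three listed in the conclusion.

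First I would observe that a direct $\po^+$-jump $\anaction \po^+ \st$ is available precisely when the trace-position of $\anaction$ lies before $\isu_\st$, i.e.\ when $\anaction$ is a store or fence execution whose issue sits in $\tau_1$. For all other candidates --- loads, local actions, \texttt{scfence}, fence executions with issue in $\tau_2$, and issue actions $\isu_{\st'}$ of later $\athread$-stores or fences --- the trace-position lies after $\isu_\st$, so $\st \po \anaction$ and no direct jump is possible. This already excludes local actions, \texttt{scfence}, and \texttt{fence}-related actions that have no $\sto/\src/\cf$ outgoing edges, since their remaining $\po^+$-targets sit in $\athread$ and hence outside $\tau_3$. \texttt{scfence} is additionally ruled out semantically: its transition rule demands every buffer of $\athread$ to be empty, contradicting the fact that $\st$ is pending in a buffer. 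For an issue $\isu_{\st'}$ of a later $\athread$-store, $\st'$ must execute in $\tau_4$ (by $\projection{\tau_3}{\athread} = \varepsilon$), so its $\sto$- and $\src$-successors all lie in $\tau_4$ and cannot equal $\st$ (either $\st$ is not a load or $\st \sto \st'$ rather than the converse).

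The hard case, which I expect to be the main obstacle, is a load $\anaction$ of $\addrof{\st}$. Since the $\po^+$-argument above also excludes its $\po$-edges, the only remaining option is an outgoing $\cf$-edge. I would analyze $\mathrm{src}_\tau(\anaction)$: because $\isu_\st$ lies in the prefix of $\anaction$ and $\st$ in the suffix, the first (early-read) clause of the source definition fires, making the source an $\athread$-store to $\addrof{\st}$ issued in the prefix and executed in the suffix --- either $\st$ itself, or a later-issued $\st^1 \neq \st$ which, by $\projection{\tau_3}{\athread} = \varepsilon$, must execute in $\tau_4$, giving $\st \sto \st^1$. In both cases every $\cf$-successor $\st''$ of $\anaction$ satisfies $\mathrm{src}_\tau(\anaction) \sto \st''$, so $\st''$ is strictly after the source in store order and therefore after $\st$, lying in $\tau_4$; and the direct edge $\anaction \cf \st$ fails because neither $\st \sto \st$ nor $\st^1 \sto \st$ holds. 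No first edge into $\tau_3 \cup \{\st\}$ exists, contradicting Corollary~\ref{local-hb-dependence}. The three surviving action types --- loads of an address distinct from $\addrof{\st}$, stores, and fences issued before $\isu_\st$ --- each admit a valid first edge (via $\cf$ to a helper store in $\tau_3$, via $\sto/\src$ to a helper action in $\tau_3$ or a direct $\sto$ to $\st$, or via a direct $\po^+$-jump to $\st$, respectively), which is exactly the classification claimed in the lemma.
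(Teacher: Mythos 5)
Your proposal rests entirely on Corollary~\ref{local-hb-dependence} (applied to the pair $\anaction,\st$ separated by $\tau_3$), but this makes the argument circular within the paper's development. The corollary is a consequence of the Dichotomy Lemma (Lemma~\ref{dichotomy}), and the base case of the dichotomy proof explicitly invokes Lemma~\ref{last-overtaken-action} --- the very statement you are proving --- to classify the overtaken action $\anaction$ when $\anotheraction$ is a store. The paper introduces this lemma precisely as a \emph{prerequisite} ``to prove the dichotomy result,'' so no happens-before-chain machinery is available at this point. Unless you can give an independent proof of the corollary that does not route through the dichotomy lemma (which you do not attempt), the argument does not stand.

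The paper's own proof is deliberately more elementary and avoids traces of circularity: for each forbidden action type it directly constructs a strictly cheaper violating computation and contradicts minimality of $\tau$. Concretely, a local or issue action $\anaction$ is commuted to the position after $\st$, which preserves the trace and strictly decreases $\delaysof{-}$; a load of $\addrof{\st}$ is likewise moved after $\st$, and since it performs an early read and $\projection{\tau_3}{\athread}=\varepsilon$ it reads from the same store before and after the move, again saving a delay; an \texttt{scfence} is impossible because $\st$ is buffered; and a \texttt{fence} issued within $\tau_2$ cannot mention $\addrof{\st}$ (FIFO of the all-addresses buffer would force $\st$ to land first), so it too can be placed after $\st$, keeping $\delaysof{-}$ equal but saving one reordering. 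Your case analysis of which happens-before edges can leave $\anaction$ is locally sensible (e.g., the source-function analysis for a load of $\addrof{\st}$), but it proves the wrong kind of statement for this stage of the paper: you would need to replace the appeal to Corollary~\ref{local-hb-dependence} by direct cost-comparison arguments of the above form.
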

\begin{proof}
We know that $\anaction$ is not an \texttt{scfence} instruction since $\st$ overtakes it. If $\anaction$ was a local or issue action, we could place it after $\st$ without changing the trace: $\tau' := \tau_1 \cdot \isu_{\st} \cdot \tau_2 \cdot \tau_3 \cdot \st \cdot \anaction \cdot \tau_4$ also is a violation. Since $\delaysof{\tau'} < \delaysof{\tau}$ we have $\mmof{\tau'} < \mmof{\tau}$.
This contradicts minimality of $\tau$.

If $\anaction$ was a load of $\addrof{\st}$, we could as well place it after $\st$. 
We note that $\anaction$ performs an early read in $\tau$. 
Since $\projection{\tau_3}{\athread}$ is empty, $\anaction$ reads from the same store in $\tau$ and $\tau'$. Therefore, the trace is preserved in this case, too. Again, $\tau'$ has less delays than $\tau$, which contradicts minimality of $\tau$.

If $\anaction$ was a \texttt{fence} instruction that was issued within $\tau_2$, then it could not contain $\addrof{\st}$. 
Otherwise, the fence would have moved $\st$ to the all-addresses buffer and from there $\st$ would have hit the global memory before $\anaction$.
This is due to the FIFO behavior of the buffer. 
Hence, the fence does not contain $\addrof{\st}$.
This means we can place it after $\st$ to obtain the new computation $\tau'$. 
As the delays for the fence increase by the same amount as they decrease for the store, $\tau'$ has the same number of delays as $\tau$. 
But $\tau'$ saves one reordering of $\st$ over the fence.
Therefore, we get a contradiction to minimality of $\tau$. \qed
\end{proof}
\begin{proof}[of Lemma~\ref{dichotomy}]
The lemma is equivalent to $\neg (i)$ implies $(ii)$, which we prove by induction on the length of $\tau_2$.

{\bf Base case}\quad If $\lengthof{\tau_2}$ = 0, we know that $\tau = \tau_1 \cdot \anaction \cdot \anotheraction \cdot \tau_3$ and $\anaction \not\hb \anotheraction$. 
If $\threadof{\anaction} = \threadof{\anotheraction}$, then $\anotheraction \po^+ \anaction$. 
This means $\anotheraction$ is a store or a \texttt{fence}.  
Due to the overtake of $\anotheraction$, action $\anaction$ will not be an $\texttt{scfence}$.
If $\anaction$ was an issue, a local assignment, or an assert, then we could save the overtake of $\anotheraction$ over $\anaction$.
The resulting computation $\tau'$ would still be violating but have a smaller cost than $\tau$.
A contradiction to minimality of $\tau$. 
So $\anaction$ is a store, a load, or a $\texttt{fence}$. 
Assume $\anotheraction$ is a store. 
By FIFO, $\anaction$ is not a store to $\addrof{\anotheraction}$. 
Together with Lemma~\ref{last-overtaken-action}, we know that $\anaction$ is a load of an address or a store to an address different from $\addrof{\anotheraction}$. 
In both cases, we can swap $\anaction$ and $\anotheraction$ leading to a computation $\tau'$ that has the same trace as $\tau$ and therefore is a violation, too. 
But $\tau'$ saves one delay (if $\anaction$ is a load) or one reordering while preserving the delays (if $\anaction$ is a store). 
This contradicts minimality of $\tau$. 
Assume $\anotheraction$ is a $\texttt{fence}$. 
If $\anaction$ is a load, we can simply swap the commands.
This saves a delay and leads to a contradiction to minimality. 
If $\anaction$ is a store or a fence, it would have entered the all-addresses buffer after the fence $\anotheraction$ and have left it earlier.
This contradicts the FIFO order.

If $\threadof{\anaction} \neq \threadof{\anotheraction}$, then we know from $\anaction \not\hb \anotheraction$ that $\anaction$ and $\anotheraction$ are not two load or store actions operating on the same address where at least one is a store. 
Therefore, we can swap $\anaction$ and $\anotheraction$ leading to a computation $\tau'$ that has the same trace and per-thread-computations as $\tau$. 
Since $\tau_{22}$ is empty, there is nothing more to show.

{\bf Step case}\quad Assume the induction hypothesis holds for $\lengthof{\tau_2} \leq n$. We consider a computation $\tau = \tau_1 \cdot \anaction \cdot \tau_2 \cdot \anotheraction \cdot \tau_3$ with $\lengthof{\tau_2} = n + 1$.
Let $\tau_2 = \tau_2' \cdot \athirdaction$. 
Since $\anaction \not\hb^+ \anotheraction$ through $\tau_2$, we have $\anaction \not\hb^+ \athirdaction$ through $\tau_2'$ or $\athirdaction \not\hb \anotheraction$.

If $\anaction \not\hb^+ \athirdaction$, then we can apply the induction hypothesis to $\tau$ with respect to $\anaction$ and $\athirdaction$. 
This results in $\tau' = \tau_1 \cdot \tau_{21}' \cdot \athirdaction \cdot \anaction \cdot \tau_{22}' \cdot \anotheraction \cdot \tau_3$. 
Applying the hypothesis again to $\tau'$ with respect to $\anaction$ and $\anotheraction$ yields $\tau'' = \tau_1 \cdot \tau_{21}' \cdot \athirdaction \cdot \tau_{221}' \cdot \anotheraction \cdot \anaction \cdot \tau_{222}' \cdot \tau_3$. In both applications of the hypothesis the trace and the per-thread-computations are preserved. Furthermore, $\tau_{222}'$ is a subsequence of $\tau_{22}'$ which is a subsequence of $\tau_2'$. 
Sequence $\tau_2'$ in turn is a subsequence of $\tau_2$.
We conclude with the observation that the subsequence relation is transitive.

If $\athirdaction \not\hb \anotheraction$, then we can apply the induction hypothesis to $\tau$ with respect to $\anotheraction$ and $\athirdaction$. This results in $\tau' = \tau_1 \cdot \anaction \cdot \tau_2' \cdot \anotheraction \cdot \athirdaction \cdot \tau_3$. Applying the hypothesis again to $\tau'$ with respect to $\anaction$ and $\anotheraction$ gives $\tau'' = \tau_1 \cdot \tau_{21}' \cdot \anotheraction \cdot \anaction \cdot \tau_{22}' \cdot \athirdaction \cdot \tau_3$. As above, in both steps the trace and the per-thread-computations are preserved and $\tau_{22}'$ is a subsequence of $\tau_2'$ which is a subsequence of $\tau_2$.\qed
\end{proof}

As preparation to prove the locality result, we study the shape of the part of a computation that follows a cycle.
This rear part can only consist of delayed store and fence actions that respect program order.
\begin{lemma}[Rear]
\label{rear-part-stores}
Consider a minimal violation $\tau = \tau_1 \cdot \anaction \cdot \tau_2 \cdot \anotheraction \cdot \tau_3\in \psocompof{\aprogram}$ with $\anaction \hb^+ \anotheraction$ through $\tau_2$ and $\anotheraction \po^+ \anaction$. Then $\anotheraction\cdot\tau_3$ contains only delayed stores and fences and these actions respect program order. 
\end{lemma}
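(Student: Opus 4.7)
The plan is to exploit the minimality of $\tau$ against the cost function $\mmof{-}$ via three successive arguments, each producing a strictly cheaper violation if the claimed structure fails. I first identify $\anotheraction$, then argue about the rest of $\tau_3$, and finally check program order.

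First, $\anotheraction$ itself is a delayed store or fence. From $\anotheraction \po^+ \anaction$, the issue $\isu_{\anotheraction}$ precedes $\anaction$ in the per-thread ordering of its thread. Because the operational semantics linearises a thread's issue actions in program order throughout the whole computation, $\isu_{\anotheraction}$ must appear somewhere in $\tau_1$. Only stores and fences have a commit action that is separate from their issue, so $\anotheraction$ is a store or a fence; and since it is separated from $\isu_{\anotheraction}$ by $\anaction \cdot \tau_2$, it is delayed.

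Next, I show that every other action $\athirdaction$ in $\tau_3$ is likewise a delayed store or fence. Assume for contradiction that $\athirdaction$ is of a different kind: a load, local assignment, assert, $\texttt{scfence}$, issue, or a non-delayed store or fence. The happens-before cycle witnessing the violation is formed by $\anaction \hb^+ \anotheraction$ through $\tau_2$ together with $\anotheraction \po^+ \anaction$, and is independent of the particular actions populating $\tau_3$. Using Lemma~\ref{dichotomy} repeatedly, I relocate $\athirdaction$ leftward past the prefix of $\tau_3$ and past $\anotheraction$, reaching a position before $\anaction$. This either yields an equivalent computation in which $\athirdaction$ has been pulled out of the delay window of $\anotheraction$ (or of some still-delayed store sitting in $\tau_3$), strictly decreasing $\delaysof{\tau}$, or, for issue/commit pairs whose absence does not affect any later action, a strictly shorter violation. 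Either outcome contradicts minimality.

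For the program-order claim, consider two delayed stores or fences $\athirdaction_1$ and $\athirdaction_2$ from the same thread appearing in $\anotheraction \cdot \tau_3$ with $\athirdaction_2 \po^+ \athirdaction_1$ but $\athirdaction_1$ preceding $\athirdaction_2$ in the sequence. The per-thread preservation of issue order forces $\isu_{\athirdaction_2}$ before $\isu_{\athirdaction_1}$, so $\athirdaction_2$ overtakes $\athirdaction_1$ and contributes to $\reordersof{\tau}$. The FIFO discipline of the all-addresses buffer means the two could only have been reordered by travelling through distinct per-address buffers, so they address different locations and are not related by $\sto$; swapping them therefore preserves the happens-before trace while strictly lowering $\reordersof{\tau}$ and keeping $\delaysof{\tau}$ and $\lengthof{\tau}$ unchanged, contradicting minimality once more.

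The main obstacle is the second step: each application of Lemma~\ref{dichotomy} to pull $\athirdaction$ leftward requires that the two actions being swapped carry no hb-through link. Justifying this rests on a Lemma~\ref{last-overtaken-action}-style case distinction that rules out the possible hb-edges (program order, store order, source, conflict) between a delayed store or fence and a non-delayed neighbour. The case of an issue action is particularly delicate because its companion commit must be tracked through the relocation to ensure the resulting sequence is still a valid member of $\psocompof{\aprogram}$.
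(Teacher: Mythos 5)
Your first and third steps are essentially sound (the third matches the paper's rearrangement argument, modulo the caveat that permuting delayed stores can change the store order and hence the trace --- the paper only claims the cycle between $\anaction$ and $\anotheraction$ survives, not the whole trace). The genuine gap is in your second step. You propose to relocate a non-delayed action $\athirdaction$ of $\tau_3$ leftward past $\anotheraction$ by repeated applications of Lemma~\ref{dichotomy} and claim this strictly decreases $\delaysof{\tau}$. This cannot work for two reasons. First, Lemma~\ref{dichotomy} explicitly preserves the per-thread projections $\projection{\tau}{\athread}$, and the cost $\mmof{-}$ is defined on exactly those projections; so no sequence of dichotomy swaps can ever lower the cost. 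In particular, if $\athirdaction$ belongs to $\threadof{\anotheraction}$, Corollary~\ref{local-hb-dependence} gives $\anotheraction \hb^+ \athirdaction$ and the swap is outright forbidden; if it belongs to another thread, the swap is cost-neutral because delays only count intermediary actions of the delaying thread. Second, the relocation can be blocked anyway: a load in $\tau_3$ may read from a store in $\tau_2$, giving a $\src$ edge that case~(i) of the dichotomy detects, and you cannot move the load past its source. You flag this as ``the main obstacle'' but do not resolve it.

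The paper's proof avoids relocation entirely: it \emph{deletes} from $\anotheraction\cdot\tau_3$ everything except the stores and fences issued in $\tau_1\cdot\anaction\cdot\tau_2$ (these must be kept so that the buffers drain). Since only a suffix of each thread's program order is removed and all asserts in the removed part disappear with it, the truncated sequence is still a valid computation; $\anotheraction$ is kept, so the cycle $\anaction \hb^+ \anotheraction \po^+ \anaction$ survives; and the length strictly drops while delays and reorders do not increase, so the cost decreases lexicographically. You hint at deletion only for ``issue/commit pairs''; extending it to all non-delayed actions in the suffix, together with the validity argument above, is the missing idea.
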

\begin{proof}
Towards a contradiction, assume $\anotheraction\cdot\tau_3$ contains further actions. 
We project $\anotheraction\cdot \tau_3$ to the stores and fences issued in $\tau_1 \cdot \anaction \cdot \tau_2$. 
The projection will keep $\anotheraction$ as it was issued before $\anaction$. 
Hence, the resulting computation $\tau'$ preserves the happens-before cycle between $\anaction$ and $\anotheraction$ and therefore is a violation, too. 
Furthermore, $\tau'$ is still a valid computation as we did not remove program-order-intermediate actions and there are no failing assertions after $\tau_2$.  Removing actions does not increase the delays or reorders, but it reduces the length. 
As $\mmof{\tau'} < \mmof{\tau}$, we have a contradiction to minimality of $\tau$.

If $\anotheraction \cdot \tau_3 = \alpha \cdot x \cdot \beta \cdot y \cdot \gamma$ with $y \po^+ x$, we can rearrange all actions to appear in program order. 
Bringing actions in program order will not violate FIFO or fence restrictions. It might change the store order but it does not affect the happens-before cycle between $\anaction$ and $\anotheraction$. 
If $\anotheraction$ is moved behind program-order-earlier stores and fences of the same thread in $\tau_3$, the cycle remains.
The reason is that the happens-before-through relation is stable under insertion of actions.  
Rearranging the stores and fences does not change the number of delays, but saves at least the reordering of $y$ over $x$.  This contradicts minimality of $\tau$. \qed
\end{proof}
The following normalization lemma shows that, given an action, one can remove all successors that are independent of this action.
To be more precise, we move the independent successors to the left of the given action. 
Technically, we apply  Lemma~\ref{dichotomy} to construct a new computation where the independent successors execute earlier.
\begin{lemma}[Dependence]
\label{hb-dependent-part}
Let $\tau = \alpha \cdot \anaction \cdot \beta \cdot \gamma\in \psocompof{\aprogram}$ be a minimal violation. There is a minimal violation $\tau' = \alpha \cdot \beta' \cdot \anaction \cdot \beta'' \cdot \gamma\in  \psocompof{\aprogram}$ where $\traceof{\tau'}=\traceof{\tau}$, $\projection{\tau'}{\athread}=\projection{\tau}{\athread}$ for all threads $\athread$, $\beta''$ a subsequence of $\beta$, and where
\begin{center}
for every action $\anotheraction$ in $\beta''$ we have $\anaction\hb^+\anotheraction$ through the part in between.
\end{center}
\end{lemma}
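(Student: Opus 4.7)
The plan is to proceed by induction on $|\beta|$, iteratively using Lemma~\ref{dichotomy} to push every action that is not hb-dependent on $\anaction$ from the segment after $\anaction$ to before $\anaction$. The base case $|\beta|=0$ is immediate by taking $\beta' = \beta'' = \varepsilon$. For the inductive step, write $\beta = \beta_0 \cdot \athirdaction$ and first apply the induction hypothesis to $\alpha \cdot \anaction \cdot \beta_0 \cdot (\athirdaction \cdot \gamma)$ to obtain a minimal violation $\alpha \cdot \beta_0' \cdot \anaction \cdot \beta_0'' \cdot \athirdaction \cdot \gamma$ in which every action $\anotheraction$ of $\beta_0''$ satisfies $\anaction \hb^+ \anotheraction$ through the prefix of $\beta_0''$ preceding $\anotheraction$.

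I then case-split on $\athirdaction$. If $\anaction \hb^+ \athirdaction$ through $\beta_0''$, we are done by setting $\beta' := \beta_0'$ and $\beta'' := \beta_0'' \cdot \athirdaction$: the dependence property transfers to every element of $\beta_0''$ verbatim and holds for $\athirdaction$ by the case assumption. Otherwise, Lemma~\ref{dichotomy} applied to the pair $\anaction, \athirdaction$ with middle $\beta_0''$ produces a minimal violation of the shape $\alpha \cdot \beta_0' \cdot \beta_{01}'' \cdot \athirdaction \cdot \anaction \cdot \beta_{02}'' \cdot \gamma$, where $\beta_{02}''$ is a subsequence of $\beta_0''$ and both the trace and the per-thread projections are preserved.

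The main obstacle is that, although $\beta_0''$ inherited the dependence property from the first use of the induction hypothesis, the shrunk suffix $\beta_{02}''$ may no longer enjoy it: a certifying hb-through chain from $\anaction$ to some $\anotheraction$ in $\beta_{02}''$ could have been routed through actions that now sit in $\beta_{01}''$ to the left of $\anaction$. To repair this, I invoke the induction hypothesis a second time on the new computation, viewed with left part $\alpha \cdot \beta_0' \cdot \beta_{01}'' \cdot \athirdaction$, focal action $\anaction$, middle $\beta_{02}''$, and right part $\gamma$. Since $|\beta_{02}''| \le |\beta_0''| < |\beta|$, this invocation is legitimate and yields the required $\tau'$, whose new $\beta''$ is a subsequence of $\beta_{02}''$ and hence of $\beta$. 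Throughout, the minimal-violation status is preserved because Lemma~\ref{dichotomy} and the induction hypothesis both keep the trace and the per-thread projections intact, and the cost function depends only on those projections.
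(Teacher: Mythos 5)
Your proof is correct. It follows the same overall strategy as the paper's --- induction on $\lengthof{\beta}$ with Lemma~\ref{dichotomy} as the engine for moving non-dependent actions to the left of $\anaction$ --- but the decomposition is genuinely different. The paper scans $\beta$ from the left for the \emph{first} action $\anotheraction$ that is not hb-dependent on $\anaction$ through its prefix $\beta_1$, applies the Dichotomy lemma once to move $\anotheraction$ before $\anaction$, and then applies the induction hypothesis a single time to the strictly shorter remaining middle $\beta_{12}\cdot\beta_2$; the dependence property of the final $\beta''$ is delivered entirely by that one invocation of the hypothesis. You instead peel off the \emph{last} action of $\beta$, apply the induction hypothesis to the prefix, invoke Dichotomy on the last action only if needed, and then apply the hypothesis a second time to repair the property on the shrunk suffix $\beta_{02}''$. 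Your version costs an extra invocation per step, but it makes explicit a subtlety that the paper's write-up leaves implicit: when Dichotomy removes actions from the middle, the hb-through chains certifying dependence for the surviving actions may have been routed through the removed actions, so the property must be re-established rather than merely inherited. Both inductions are well-founded, since every invocation of the hypothesis is on a middle of length at most $\lengthof{\beta}-1$, and both preserve minimality because the trace and the per-thread projections are kept intact at every step and the cost function depends only on the latter.
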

\begin{proof}
We proceed by induction on the length of $\beta$.
The base case of an empty sequence is trivial.
Assume the statement holds for $\lengthof{\beta} \leq n$ and consider a computation $\tau = \alpha \cdot \anaction \cdot \beta \cdot \gamma$ with $\lengthof{\beta} = n+1$. 
If every action in $\beta$ is happens-before-through dependent on $\anaction$, then $\tau' := \tau$ satisfies the requirements.  
Otherwise, we take the first action $\anotheraction$ in $\beta$ so that $\tau = \alpha \cdot \anaction \cdot \beta_1 \cdot \anotheraction \cdot \beta_2 \cdot \gamma$ and $\anaction \not\hb^+ \anotheraction$ through $\beta_1$. 
An application of Lemma~\ref{dichotomy} gives a computation $\tau' = \alpha \cdot \beta_{11} \cdot \anotheraction \cdot \anaction \cdot \beta_{12} \cdot \beta_2 \cdot \gamma$ with the same trace and per-thread-computations as $\tau$ and $\beta_{12}$ being a subsequence of $\beta_1$. Since $\lengthof{\beta_{12} \cdot \beta_2} < \lengthof{\beta}$ (we took out $\anotheraction$), we can now apply the induction hypothesis to $\tau'$. 
It gives another computation $\tau'' = \alpha \cdot \beta_{11} \cdot \anotheraction \cdot (\beta_{12} \cdot \beta_2)' \cdot \anaction \cdot (\beta_{12} \cdot \beta_2)'' \cdot \gamma$, where $(\beta_{12} \cdot \beta_2)''$ is a subsequence of $\beta_{12} \cdot \beta_2$. 
As $\beta_{12}$ is a subsequence of $\beta_1$ and $\beta_1 \cdot \beta_2$ is a subsequence of $\beta$, we conclude that $(\beta_{12} \cdot \beta_2)''$ is a subsequence of $\beta$. 
As Lemma~\ref{dichotomy} and the induction hypothesis preserve the per-thread-computations and the trace, the corresponding equalities are fulfilled, too. 
The dependency requirement directly follows from the induction hypothesis.\qed
\end{proof}

\subsection{Locality}
\begin{proof}[of Theorem~\ref{locality}]

If the program admits violations, then there is one that is local in the sense that only one thread delays actions.
Towards a contradiction, we assume that in all violations at least two threads delay actions.
(If no thread delays actions, the computation is SC.)
Among these violations, let $\tau$ be a minimal one.
We define:
\begin{align*}
y&:= \text{last $\isu$ or load action that has been overtaken by a store}\\
\st_y&:= \text{earliest store that overtook $y$}\\
\athread_y&:= \threadof{y} (= \threadof{\st_y})\\
x &:= \text{last $\isu$ or load from a thread $\athread_x \neq \athread_y$ that has been overtaken by a store}\\
\st_x&:= \text{earliest store that overtook $x$}.
\end{align*}
\begin{lemma}\label{existence}
$x$, $y$, $\st_x$, and $\st_y$ exist.
\end{lemma}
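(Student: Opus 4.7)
The plan is to show that in the minimal violation $\tau$, each of the two delaying threads contains a store that overtakes an $\isu$ or load action of the same thread; once this sub-claim is available, $y$ and $\st_y$ are defined as the rightmost such overtake and its earliest overtaking store, and $x$, $\st_x$ arise from restricting attention to a different delaying thread.

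First, I would show that any thread $\athread'$ with delays must in fact delay a store, not only fences. If only fences of $\athread'$ were delayed, then each delayed fence would have to be blocked at its issue time by prior entries in the all-addresses buffer, for otherwise it could be executed immediately after its issue. Following the blocking chain back in the absence of delayed stores, we reach an earliest blocked fence that is at the head of the all-addresses buffer when it is issued and can therefore be executed without delay. Moving its execution up leaves the happens-before trace unchanged (fences are placed in the trace at their issue time and contribute no $\sto$, $\src$, or $\cf$ edges), but strictly decreases $\delaysof{\tau}$, contradicting minimality.

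Second, for a delayed store $\st'$ of $\athread'$ I would produce an $\isu$- or load-overtake by a store of $\athread'$. Let $\anaction$ be the last same-thread action between $\isu_{\st'}$ and $\st'$. By Lemma~\ref{last-overtaken-action}, $\anaction$ is a load, a store, or a fence issued before $\isu_{\st'}$. If $\anaction$ is a load, then $\st'$ overtakes it and we are done. If $\anaction$ is a store $\st_2$ with $\isu_{\st_2}$ lying between $\isu_{\st'}$ and $\st'$, then $\st'$ overtakes the $\isu$ action $\isu_{\st_2}$. Otherwise $\anaction$ is a store or fence whose issue precedes $\isu_{\st'}$, making $\anaction$ itself delayed with $\isu_{\st'}$ in its $\tau_2$-range: if $\anaction$ is a store, it overtakes the $\isu$ action $\isu_{\st'}$; if $\anaction$ is a fence, we iterate the argument of the first step on this fence to descend to a delayed store whose overtakes then fall into one of the earlier cases.

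With the sub-claim established for both delaying threads, I take $y$ to be the rightmost $\isu$ or load in $\tau$ overtaken by some store, $\st_y$ the earliest such overtaker, and $\athread_y := \threadof{y}$ (which equals $\threadof{\st_y}$ because overtakes are intra-thread). The sub-claim applied to the second delaying thread guarantees an $\isu$ or load of a thread $\athread_x \neq \athread_y$ that is overtaken by a store, so $x$ and $\st_x$ are picked analogously by restricting to threads different from $\athread_y$. The main obstacle is handling fences: ruling out an only-fence-delays scenario and absorbing the fence sub-case of step two by the argument of step one without circular reasoning. Both issues hinge on fences being invisible to $\sto$, $\src$, and $\cf$ and placed at their issue time in the trace, so they can be un-delayed without affecting the happens-before cycle.
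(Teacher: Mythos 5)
Your overall decomposition matches the paper's: you first show that every delaying thread must delay a store (the paper argues this the same way, taking the first delayed fence, observing that FIFO prevents it from overtaking stores or fences when no store is delayed, and dropping its remaining overtakes to contradict minimality), and you then case-split on the last action $\anaction$ overtaken by a delayed store via Lemma~\ref{last-overtaken-action}; your load case, store-issued-in-range case, and store-issued-earlier case coincide exactly with the paper's. Where you genuinely diverge is the remaining sub-case, $\anaction$ a fence issued before $\isu_{\st}$. The paper resolves it with a trace argument: Corollary~\ref{local-hb-dependence} gives $\isu_{\st} \hb^+ \anaction$ through the intermediate segment, and since an issue action has only outgoing $\po$ edges while the chain must eventually use a $\sto$, $\src$, or $\cf$ edge (which local actions cannot supply), the chain necessarily visits a load or an issue of the same thread inside the overtaken range --- and that is the wanted overtaken action. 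Your operational alternative --- the delayed fence must be blocked by an entry ahead of it in the all-addresses buffer that itself lands late, and that entry is either a delayed store (which then overtakes the fence's issue action, an $\isu$, so we are done) or another delayed fence on which one recurses --- also works and is arguably more elementary, since it avoids the dichotomy machinery for this case. It is, however, the least developed part of your write-up, and three points need to be made explicit for it to go through: (a) the mere presence of prior buffer entries at the fence's issue time does not justify its delay; the blocking entry must land \emph{after} the action the fence overtakes, which is exactly what makes that entry delayed; (b) the descent terminates because the issue times of the successive fences strictly decrease; and (c) the delayed store you reach overtakes the blocked fence's issue, which need not be its \emph{last} overtaken action --- this is harmless, since the lemma only needs some overtaken $\isu$ or load, but it means you are not literally "falling into one of the earlier cases" of the Lemma~\ref{last-overtaken-action} analysis, and the conclusion should be drawn directly rather than by appeal to those cases.
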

We defer the proof of the lemma for the moment.
Given $y$, $x$, $\st_y$, and $\st_x$, the computation takes one of the following forms (for some $\tau_1,\ldots, \tau_5\in\ACT^*$):
\begin{align}
\tau &= \tau_1 \cdot x \cdot \tau_2 \cdot \st_x \cdot \tau_3 \cdot y \cdot \tau_4 \cdot \st_y \cdot \tau_5\label{forma}\\
\tau &= \tau_1 \cdot x \cdot \tau_2 \cdot y \cdot \tau_3 \cdot \st_y \cdot \tau_4 \cdot \st_x \cdot \tau_5\label{formb}\\
\tau &= \tau_1 \cdot x \cdot \tau_2 \cdot y \cdot \tau_3 \cdot \st_x \cdot \tau_4 \cdot \st_y \cdot \tau_5\ .\label{formc}
\end{align}
The task is to show that each of the forms yields a contradiction.\\[0.2cm]
{\bf Form~\eqref{forma}:}\quad 
We apply Proposition~\ref{hb-cycle-existence} to $x$ and $\st_x$ and get a happens-before cycle within $\tau_1 \cdot x \cdot \tau_2 \cdot \st_x$. 
By Lemma~\ref{rear-part-stores}, the part starting with $\st_x$ consists of stores and fences only.
This contradicts the fact that $y$ is an $\isu$ or a load.\\[0.2cm]
{\bf Form~\eqref{formc}:}\quad
With Proposition~\ref{hb-cycle-existence}, we get a happens-before cycle within $\tau_1 \cdot x \cdot \tau_2 \cdot y \cdot \tau_3 \cdot \st_x$. 
By Lemma~\ref{rear-part-stores}, the part after $\st_x$ consists of delayed stores and fences only. 
Proposition~\ref{local-hb-dependence} yields a happens-before chain from $x$ to $\st_x$. 
We can rearrange the computation as follows: 
\begin{align*}
\tau' = \tau_1 \cdot x \cdot \tau_2 \cdot y \cdot \tau_3 \cdot (\projection{\tau_4}{\athread_y}) \cdot \st_y \cdot \st_x \cdot (\projection{\tau_4}{\text{rest}}) \cdot \tau_5\ .
\end{align*} 
The happens-before chain is still present as the relation is stable under insertion. 
As we did not change the per-thread-computations, $\tau'$ is a minimal violation, too. 
It has Form~\eqref{formb}, which yields a contradiction as we will show next.\\[0.2cm]
{\bf Form~\eqref{formb}:}\quad
We will show that there are two happens-before cycles, one between $x$ and $\st_x$, the other between $y$ and $\st_y$. 
The ultimate goal is to move $\st_y$ back over $y$ to save one delay. 
We can do this in such a way that there remains a happens-before cycle between $x$ and $\st_x$. 
The resulting computation is therefore a violation with less delays, which contradicts minimality of $\tau$.

By Corollary~\ref{local-hb-dependence}, we have $y\hb^+ \st_y$ through $\tau_3$. 
With this, Lemma~\ref{rear-part-stores} applies and shows that $\tau_4\cdot \st_x\cdot \tau_5$ only consists of delayed stores and fences.

We apply Lemma~\ref{hb-dependent-part} to $\tau_3$ and obtain 
\begin{align*}
\tau' := \tau_1 \cdot x \cdot \tau_2 \cdot \tau_3' \cdot y \cdot \tau_3'' \cdot \st_y \cdot \tau_4 \cdot \st_x \cdot \tau_5.
\end{align*}
Computation $\tau'$ is again a minimal violation.
It is a violation as we do not change the trace.
It is minimal as we do not change the per-thread computations, and the delays and reorders are counted per thread.  

By Lemma~\ref{hb-dependent-part}, every action in $\tau_3''$ is happens-before-through dependent on action $y$. 
By Proposition~\ref{local-hb-dependence}, we moreover have $y \hb^+ \st_y$ through $\tau_3''$. 
This means, we find an access to $\addrof{\st_y}$ in $\tau_3''$. 
Let $\acc$ be the first such access and let $\tau_3'' = \tau_{3a}'' \cdot \acc \cdot \tau_{3b}''$. 
We move $\st_y$ over $\tau_{3b}''$.
Since the store cannot be moved over fences due to the FIFO all-addresses buffer, we obtain the computation
\begin{align*}
\tau'' := \tau_1 \cdot x \cdot \tau_2 \cdot \tau_3' \cdot y \cdot \tau_{3a}'' \cdot \acc \cdot 
(\projection{\tau_{3b}''}{&\text{fences of }\athread_y}) \cdot \st_y \cdot\\ 
&(\projection{\tau_{3b}''}{\text{delayed}}) \cdot \tau_4 \cdot \st_x \cdot \tau_5\ .
\end{align*}
Here, $\projection{\tau_{3b}''}{\text{delayed}}$ is the projection of $\tau_{3b}''$ to the delayed stores and fences (of threads different from $\athread_y$). 
Computation $\tau''$ is executable, i.e., in the set of computations of the program. 
Indeed, starting from $\acc$ it does not contain any assertions that could fail. 
The assertions before $\acc$ continue to hold as $\tau'$ was executable. 
Computation $\tau''$ is a violation due to the cycle $y \hb^+ \acc \cfst \st_y \po^+ y$.
It is minimal as we do not change the computation of thread $\athread_y$ and potentially only remove delays or reorders from the computations of the other threads. 
Actually, there is no removal. 
As we started from a minimal violation, we get that
$\projection{\tau_{3b}''}{\text{delayed}}$ has to be all of $\tau_{3b}''$ except for the fences of $\athread_y$.

Applying Corollary~\ref{local-hb-dependence}, we get $x$ happens before $\st_x$ through
\begin{align}
\tau_2 \cdot \tau_3' \cdot y \cdot \tau_{3a}'' \cdot acc \cdot 
(\projection{\tau_{3b}''}{\text{fences of }\athread_y}) \cdot \st_y \cdot
 (\projection{\tau_{3b}''}{\text{delayed}}) \cdot \tau_4\ .\label{through}
\end{align}
Our goal is to move $\st_y$ before $y$ in order to save a delay.
This move should preserve the relation $x\hb^+\st_x$ through~\eqref{through}. 
Moving $\st_y$ also moves all fences of $\athread_y$ in $\tau_{3a}''$.
Since $(\projection{\tau_{3a}''}{\text{fences of }\athread_y})\cdot (\projection{\tau_{3b}''}{\text{fences of }\athread_y})=(\projection{\tau_{3}''}{\text{fences of }\athread_y})$, the resulting computation can be written as
\begin{align*}
\tau''':=\tau_1 \cdot x \cdot \tau_2 \cdot \tau_3'\cdot
(\projection{\tau_3''}{&\text{fences of }\athread_y}) \cdot \st_y \cdot y \cdot \\
&(\projection{\tau_{3a}''}{\text{rest}}) \cdot acc \cdot 
(\projection{\tau_{3b}''}{\text{delayed}}) \cdot \tau_4 \cdot \st_x \cdot \tau_5\ .
\end{align*}
The computation is executable because $\projection{\tau_{3a}''}{\text{rest}}$ does not contain an access to $\addrof{\st_y}$.
This is due to the choice of $\acc$ as the first access to the address.

It remains to show that $x \hb^+ \st_x$ through~\eqref{through} still holds. 
If $\st_y$ is not on this happens-before-through chain, we are done. 
If the chain contains $\st_y$, we have that $x\hb^+\anaction \popluscfst \st_y\sto\st \sto^* \st_x$ for some actions $\anaction$ and $\st$ in~\eqref{through}$\cdot \st_x$ ($\st$ can be $\st_x$). 
If we have a program-order dependence from $\anaction$ to $\st_y$, this is not changed by moving $\st_y$ back over $y$. 
Computation $\tau'''$ contains the chain $x\hb^+\anaction \po^+ \st_y \srcst \acc \cfst \st \sto^* \st_x$. 
Indeed, as $y$ is chosen to be a load or an issue, we know that action $\anaction$ (which is program-order earlier than $\st_y$) has to lie before $y$ in~\eqref{through}. 
If we have a conflict or a store dependence from $\anaction$ to $\st_y$, we know that $\anaction = \acc$.
In this case, $\tau'''$ contains the chain $x \hb^+ \acc \cfst \st \sto^* \st_x$. 
Therefore, $\tau'''$ contains the cycle $x \hb^+ \st_x \po^+ x$ and thus is a violation. 
As the rearrangement saves one delay of $y$, $\tau'''$ contradicts minimality of $\tau''$.\qed	
\end{proof}

\begin{proof}[of Lemma~\ref{existence}]
To argue for the existence, we first show that each thread that delays actions will in particular delay stores.
Assume this is not the case and the thread only delays fences.
Take the first fence that gets delayed. 
It is not delayed over stores. 
The FIFO property of the all-addresses buffer would force the stores to be delayed with the fence, but there are no delayed stores. 
Furthermore, the fence cannot be delayed over other fences. 
There is no earlier delayed fence, as we took the first one, and there cannot be a non-delayed fence due to the FIFO property that forbids reorderings among fences.
The fence is thus delayed over loads, local actions, and issues only. Overtakes of fences over these actions can be dropped without changing the trace while reducing the cost of the computation. A contradiction to minimality of $\tau$.

To find the required $y$ and $\st_y$ and $x$ with $\st_x$, we project $\tau$ to a thread $\athread$ that delays actions.
With the previous reasoning, the thread delays at least one store $\st$. 
Let the last action overtaken by $\st$ be $\anaction$.
We thus have
\begin{align*}
\projection{\tau}{\athread} = \tau_1 \cdot \isu_\st \cdot \tau_2 \cdot a \cdot \st \cdot \tau_3\quad\text{for some }\tau_1, \tau_2, \tau_3\in \ACT^*.
\end{align*} 
By Lemma~\ref{last-overtaken-action}, $a$ can be a load, a store, or a fence. 
In case of a load or a store issued in $\tau_2$, we are done. 
If $a$ is a store issued in $\tau_1 = \tau_{1a} \cdot \isu_a \cdot \tau_{1b}$, then $\isu_\st$ is an issue action overtaken by the store $a$. 
If $a$ is a fence, the situation is more delicate.
In this case, the full computation is 
\begin{align*}
\tau = \tau_{1} \cdot \isu_\fence \cdot \tau_{2} \cdot \isu_\st \cdot \tau_3 \cdot \fence \cdot \tau_4\cdot \st \cdot \tau_5\quad\text{for some }\tau_1,\ldots, \tau_5\in \ACT^*.
\end{align*} 
An application of Corollary~\ref{local-hb-dependence} gives $\isu_{\st} \hb^+ \fence$ through $\tau_3$. 
As there is no forward $\po^+$ relation from $\isu_{\st}$ to $\fence$, we will find actions of thread $\athread$ on this path. 
The $\isu_{\st}$ action can only have an outgoing $\po$ dependence to a local, a load, or an issue action in $\tau_3$.
The load or the issue would serve as the required action overtaken by $\st$.
For the local action, the same reasoning applies.
Since we eventually need a $\cf$ or $\sto$ dependence to form a happens-before-through chain, and since local actions have no such outgoing dependencies, we are sure to find a load or an issue.\qed
\end{proof}

\begin{proof}[of Proposition~\ref{witnesses}]
Let $\athread_A$ be the one thread delaying actions. 
The first action that is delayed has to be a store. 
If it was a fence, we could simply save the delay by executing it directly after its issue action. 
There will be no store with a fenced address in between. So let $\st$ be the first action that is delayed and let $\anaction$ be the last action that is overtaken by $\st$. We have 
\begin{align*}
\tau = \tau_1 \cdot \isu_\st \cdot \tau_2 \cdot a \cdot \tau_3 \cdot \st \cdot \tau_4
\end{align*} with $\projection{\tau_3}{\athread_A} = \varepsilon$. 
By Lemma~\ref{last-overtaken-action}, $\anaction$ is a load, a store, or a fence action that was issued in $\tau_1$. 
The latter one is not possible as $\st$ is the first action that was delayed. 
Therefore, $\anaction$ is either a load or a store issued in $\tau_2$ and we have $\st \po^+ \anaction$. 
With Corollary~\ref{local-hb-dependence}, we get $\anaction \hb^+ \st$ through $\tau_3$. 
Altogether, there is the happens-before cycle $\st \po^+ \anaction \hb^+ \st$. 
Lemma~\ref{rear-part-stores} ensures that $\tau_4$ consists of landing stores and fences only. 
As $\athread_A$ is the only thread delaying actions, $\tau_4$ solely contains actions of thread $\athread_A$.

We can show that $\tau_2 \cdot \anaction$ does not contain delayed actions. 
Towards a contradiction, assume there is a first action $x$ in $\tau_2 \cdot \anaction$ that is delayed over the last action $\anotheraction$. 
As $\st$ is the first action that gets delayed, $x$ is issued in $\tau_2$.
We have $\tau_2 \cdot \anaction = \alpha \cdot \isu_x \cdot \beta \cdot \anotheraction \cdot \gamma \cdot x \cdot \delta$ with $\threadof{\anotheraction} = \threadof{x} = \athread_A$ and $\projection{\gamma}{\athread_A} = \varepsilon$.
We know that $x$ has to be a store. 
If it was a delayed fence, we argue that we could drop a delay of $x$. 
If $b$ was a store or fence action, then it was not delayed as $x$ is the first delayed action. 
This implies a reordering between $x$ and $b$. 
Reorderings between fences as well as reordering a later issued store before an earlier issued fence is not possible due to the FIFO property of the all-addresses buffer.
In the remaining cases, $b$ being an issue, local, or load action, we can drop the delay of $x$ over $b$, which again contradicts minimality of $\tau$.

By Lemma~\ref{last-overtaken-action}, $\anotheraction$ is either a load, a store, or a fence issued before $\isu_x$. 
The latter fence is not possible as the only delayed action earlier than $x$ is $\st$. 
With the same argument, we know that $\anotheraction$ will be issued in $\beta$ if it is a store.
Corollary~\ref{local-hb-dependence} yields the cycle $\anotheraction \hb^+ x \po^+\anotheraction$. 
Applying Lemma~\ref{rear-part-stores}, we know that the actions starting with $x$ have to respect program order, in particular $x \po^+ \st$. But $x$ is issued after $\isu_\st$, which contradicts that order.

Furthermore, we can assume that for every action $\athirdaction$ in $\tau_3\cdot \st$ we have $\anaction \hb^+ \athirdaction$. 
If this is not the case, we can apply Lemma~\ref{hb-dependent-part} to get a minimal violation that fulfills that property.\qed
\end{proof}

\subsection{Singularity}

\begin{proof}[of Proposition~\ref{no-move-of-two-stores}]
Towards a contradiction, assume there is such a computation and let $\athread:=\threadof{\st_1} = \threadof{\st_2}$. 
We have $\st_1 \po^+ \st_2$, for otherwise we could swap the stores violating minimality of $\tau$. 
Action $\st_2$ is issued in $\tau_1$, say $\tau_1 = \tau_{1a} \cdot \isu_{\st_2} \cdot \tau_{1b}$. 
The overall strategy will be to move $\st_1$ back over $\isu_{\st_2}$.

We find the first action $\acc'$ that is
(i) before $\st_1$ in the relation $\cfst\po^*$ and (ii) placed on an $\isu_{\st_2}\hb^+\st_1$ chain through $\tau_{1b}$.
Intuitively, action $\acc'$ is from another thread ($\cfst$) and accesses an address that leads to a program-order dependence ($\po^*$).   
Technically, we first apply Lemma~\ref{hb-dependent-part} to $\tau_{1b}$ and get
\begin{align*}
\tau' = \tau_{1a} \cdot \tau_{1b}' \cdot \isu_{\st_2} \cdot \tau_{1b}'' \cdot \st_1 \cdot \st_2 \cdot \tau_2\ .
\end{align*} 
By Corollary~\ref{local-hb-dependence}, we have $\isu_{\st_2} \hb^+ \st_1$ through $\tau_{1b}''$. 
This path contains an access $\acc'$ with  
(1) $\isu_{\st_2} \hb^+ \acc' \cfst \st_1$ or 
(2) $\isu_{\st_2} \hb^+ \acc' \cfst \st' \po^+ \st_1$.  
Looking at the second case, $\st'$ really has to be a store since there are only stores in $\tau_{1b}''$ that are program-order-earlier than $\st_1$.

We are interested in the first access to $\addrof{\acc'}$.
To be precise, we consider all possible choices for $\acc'$. 
For each $\acc'$, we find the first access $\acc$ to $\addrof{\acc'}$ in $\tau_{1b}'' = \tau_{1b1}'' \cdot \acc \cdot \tau_{1b2}''$. 
Among all possible $\acc$, we take the first one.

We place all interesting actions together: 
\begin{align*}
\tau'' := \tau_{1a} \cdot \tau_{1b}' \cdot \isu_{\st_2} \cdot \tau_{1b1}'' \cdot \acc \cdot (\projection{\tau_{1b2}''}{\athread}) \cdot \st_1 \cdot \st_2 \cdot (\projection{\tau_{1b2}''}{\text{rest}}) \cdot \tau_2\ .
\end{align*} 
Assume we projected the part after $\acc$ to landing stores. 
Then it is easy to see that $\tau''$ is executable and has the same per-thread-computations as $\tau'$ or even less delays.
We have $\isu_{\st_2} \hb^+ \acc$, (1) $\acc \cfst \st_1$ respectively (2) $\acc \cfst \st' \po^+ \st_1$, and $\st_1 \po^+ \st_2$. 
Because of this happens-before cycle, $\tau''$ is a minimal violation. 
Minimality of $\tau''$ ensures that the assumed projection was not necessary, since the part in question did not contain any non-store action. 
The stores after $\acc$ are consistent with the program order as rearranging the stores would keep the cycle and save some reorders.

To conclude the argumentation, we rely on the following lemma, the proof of which we defer for the moment.
\begin{lemma}\label{allstores}
All stores in $(\projection{\tau_{1b2}''}{\athread}) \cdot \st_1 \cdot \st_2$ are to the address $\addrof{\acc}$.
\end{lemma}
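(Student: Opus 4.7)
The plan is to argue by contradiction, combining the extremal property of $\acc$ (the earliest first-access among all $\acc'$ candidates) with the cost-minimality of $\tau''$. Suppose some store $\st$ appearing in $(\projection{\tau_{1b2}''}{\athread}) \cdot \st_1 \cdot \st_2$ has $\addrof{\st} = b \neq \addrof{\acc}$. All three of $\st, \st_1, \st_2$ are stores of $\athread$, and $\st \po^* \st_1 \po^+ \st_2$. The strategy is to use $\st$ itself to exhibit another valid pivot $\acc_\st$ for the role of $\acc$, and to push the resulting rearrangement below the cost of $\tau''$.

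In the representative subcase $\st \in (\projection{\tau_{1b2}''}{\athread})$, so $\st \po^+ \st_1$, let $\acc_\st$ denote the first access to $b$ inside $\tau_{1b}''$; such an $\acc_\st$ exists since $\st \in \tau_{1b2}'' \subseteq \tau_{1b}''$ itself accesses $b$. By Lemma~\ref{hb-dependent-part} (applied when $\tau_{1b}''$ was obtained), every action of $\tau_{1b}''$ is happens-before-through dependent on $\isu_{\st_2}$, whence $\isu_{\st_2} \hb^+ \acc_\st$. Because $\acc_\st$ is the first access to $b$, we get $\acc_\st \cfst \st$, and together with $\st \po^+ \st_1$ this yields the chain $\isu_{\st_2} \hb^+ \acc_\st \cfst \st \po^+ \st_1$. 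Hence $\acc_\st$ qualifies as a valid candidate for $\acc'$ under case~(2) with $\st' := \st$. By the choice of $\acc$ as the earliest first-access across all candidates, $\acc_\st$ sits no earlier than $\acc$ in $\tau_{1b}''$; since $\addrof{\acc_\st} = b \neq \addrof{\acc}$ forces $\acc_\st \neq \acc$, we conclude that $\acc_\st$ lies strictly inside $\tau_{1b2}''$.

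I would then replay the rearrangement used to construct $\tau''$ from $\tau'$, pivoting now on $\acc_\st$ instead of $\acc$. The resulting computation $\tau'''$ preserves the happens-before cycle $\isu_{\st_2} \hb^+ \acc_\st \cfst \st \po^+ \st_1 \po^+ \st_2$ and is executable by the same argument that justified $\tau''$, using that $\acc_\st$ is the first access to $b$. The key point is that in $\tau'''$ the stores $\st_1, \st_2$ are placed strictly closer to $\isu_{\st_2}$ than in $\tau''$: they now pass over only the post-$\acc_\st$ part of $\tau_{1b2}''$ rather than all of it. Under the lexicographic cost $(\delaysof{-}, \reordersof{-}, \lengthof{-})$, the saved reorderings of $\st_1, \st_2$ over the actions of $\tau_{1b2}''$ that now precede the pivot give a strictly smaller cost, contradicting minimality of $\tau''$. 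The cases $\st = \st_1$ (handled by case~(1), which forces $\addrof{\st_1} = \addrof{\acc}$ directly) and $\st = \st_2$ (handled via $\isu_{\st_2} \po^+ \st_2$ together with an analogous pivot choice) follow by the same pattern.

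The main obstacle is the bookkeeping of delays and reorderings across the rearrangement: one must rule out that reorderings newly induced by moving $\st_1, \st_2$ earlier could compensate for the saved ones. This tracks the same minimality analysis already carried out for $\tau''$, and leans on the earlier observation that the $\athread$-actions in the critical window are stores in program order, which is what makes the second component of the lexicographic cost strictly drop rather than merely break even.
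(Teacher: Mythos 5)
There is a genuine gap, and it sits at the heart of your argument: the claimed strict cost decrease from re-pivoting on $\acc_\st$ does not exist. Both $\delaysof{-}$ and $\reordersof{-}$ are computed from per-thread projections: a delayed store is only charged for overtaking actions \emph{of its own thread}, and a reorder is only counted against another store or fence \emph{of the same thread}. Re-pivoting from $\acc$ to a later $\acc_\st$ merely re-interleaves $\st_1,\st_2$ relative to actions of \emph{other} threads; the set of $\athread$-actions lying between $\isu_{\st_2}$ and $\st_2$ (and between $\isu_{\st_1}$ and $\st_1$) is unchanged, so your $\tau'''$ has exactly the same cost as $\tau''$ and no contradiction is obtained. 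The paper gets its strict decrease differently: it uses the extremality of $\acc$ in the \emph{opposite} direction, namely to conclude that $\tau_{1b1}''$ contains \emph{no} access by another thread to the deviant address $b$, and then lands the deviant store all the way back, directly before $\isu_{\st_2}$ --- an action of $\athread$ --- which removes a real delay. Relatedly, your construction of the new pivot is incomplete: the first access $\acc_\st$ to $b$ in $\tau_{1b}''$ may well be the deviant store itself (or another action of $\athread$), in which case there is no edge $\acc_\st \cfst \st$ and no new candidate $\acc'$ at all. This is not a corner case; after the extremality of $\acc$ has ruled out other-thread accesses to $b$ in $\tau_{1b1}''$, it is precisely the situation that remains, i.e., the main case.

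Two further problems. The subcase $\st=\st_1$ is not ``forced directly by case~(1)'': even if some candidate $\acc'$ satisfies $\isu_{\st_2} \hb^+ \acc' \cfst \st_1$, the chosen $\acc$ may be the first access associated with a \emph{different} candidate at a different address, so $\addrof{\st_1}=\addrof{\acc}$ does not follow; the paper treats $\st_1$ together with the stores of $\projection{\tau_{1b2}''}{\athread}$ by the move-before-$\isu_{\st_2}$ argument. And the subcase $\st=\st_2$, which you defer to ``an analogous pivot choice,'' is the delicate part of the paper's proof: one must first rule out any other-thread access $\acc_2$ to $\addrof{\st_2}$ in $\tau_{1b1}''$ by exhibiting a cheaper violation with the cycle $\isu_{\st_2}\hb^+\acc_2\cfst\st_2$, and then show that landing $\st_2$ immediately after $\isu_{\st_2}$ strictly saves a delay, which requires extracting a non-delayed action of $\athread$ in $\tau_{1b1}''$ from the chain $\isu_{\st_2}\hb^+\acc$ (an issue can only leave via program order, and the eventual switch to a conflict or store edge needs a load or a non-delayed store). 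None of this is covered by your pattern.
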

With this result, we can move $\st_1$ before $\isu_{\st_2}$while preserving the cycle:
\begin{align*}
\tau''':=\tau_{1a} \cdot \tau_{1b}' \cdot (\projection{\tau_{1b2}''}{\athread}) \cdot \st_1 \cdot \isu_{\st_2} \cdot \tau_{1b1}'' \cdot \acc \cdot \st_2 \cdot (\projection{\tau_{1b2}''}{\text{rest}}) \cdot \tau_2\ . 
\end{align*}
Since there is no assertion after $\acc$ and $\acc$ is the only action that could have loaded a different value, the computation is executable. 
It still contains the cycle $\isu_{\st_2}\hb^+\acc\cfst\st_2$. 
A contradiction to minimality of $\tau''$.\qed
\end{proof}
\begin{proof}[of Lemma~\ref{allstores}]
We look at $(\projection{\tau_{1b2}''}{\athread}) \cdot \st_1$ first. 
Assume this part contains a store to a different address and take the first one.  
By the choice of $\acc$, $\tau_{1b1}''$ does not contain an access by another thread to that address. 
We can place the store directly before $\isu_{\st_2}$. 
The resulting computation is still executable, has the same trace, and saves one delay (over $\isu_{\st_2}$). 
This contradicts minimality of $\tau''$.

Now we look at $\st_2$. 
Assume it is to an address different from $\addrof{\acc}$. If there is a first access $\acc_2$ of another thread to $\addrof{\st_2}$ in $\tau_{1b1}'' = \tau_{1b1a}'' \cdot \acc_2 \cdot \tau_{1b1b}''$, we can use this access to have a cycle in a rearranged computation with less delays: $\tau_{1a} \cdot \tau_{1b}' \cdot (\projection{\tau_{1b2}''}{\athread}) \cdot \st_1 \cdot \isu_{\st_2} \cdot \tau_{1b1a}'' \cdot \acc_2 \cdot \st_2 \cdot \ldots$ (append the remaining actions in the order they appear in $\tau''$ and project them to landing stores). 
Here we have $\isu_{\st_2} \hb^+ \acc_2 \cfst \st_2$.
So there is no such access $\acc_2$.

We can place action $\st_2$ directly after $\isu_{\st_2}$ without changing the trace. 
If $\tau_{1b1}''$ contains an action of $\athread$ that is not a delayed store, this already contradicts minimality of $\tau''$ as we save delays. 
We show that $\isu_{\st_2} \hb^+ \acc$ through $\tau_{1b1}''$ implies the existence of such a not-delayed-store. 
The path begins with $\isu_{\st_2}$ and has to switch to $\acc$ somewhere.
The outgoing relation of an issue can only be program order. 
Only load and store actions create conflict or store dependencies to $\acc$.
Because of the program order, such a store would not be delayed.  \qed
\end{proof}

\section{Details of the Instrumentation}
\subsection{Instrumentation of the Attacker for Locality}
To begin with, we give the instrumentation of the attacker.
For a precise definition, let $\attacker$ declare registers $\areg^*$,
have initial location $\alab_{0}$, and define instructions
$\langle linst\rangle^*$ that contain $\stinst$ and $\lastinst$ from the attack.
The instrumentation is
\begin{align*}
\sem{\attacker}:= \lit*{thread}\ \tilde\attacker\ & \lit*{regs}\ \areg^*, \auxaddress, \auxfence\ \lit*{init}\ \alab_0\\
&\lit*{begin}\ \langle linst\rangle^*\ \semattackerstinst{\stinst}\ \semattackerlastinst{\lastinst}\ \semattacker{\langle linst\rangle}^* \ \attackerwait\ \lit*{end}\ ,
\end{align*}
where the functions are defined in Figure~\ref{Figure:TranslationAttackerLocality}.

\begin{figure}[ht]
	\begin{eqnarray}
	\semattackerstinst{\thetransition{\alab_1}{\alab_2}{\thestore{\anexpr_1}{\anexpr_2}}} &:=&\thetransition{\alab_1}{\tilde\alab_{x}}{\thestore{\auxdelayed{\anexpr_1}}{\auxdelayed{\anexpr_2}}}\label{Equation:StoreInst}\\
	&&\thetransition{\tilde\alab_{x}}{\tilde\alab_2}{\thelocal{\auxaddress}{\anexpr_1}}\notag\\[1mm]
	%
	\semattackerlastinst{\thetransition{\alab_1}{\alab_2}{\theload{\areg}{\anexpr}}}&:=&
	\thetransition{\tilde\alab_{1}}{\tilde\alab_{x1}}{\thecondition{\themem{\auxdelayed{\anexpr}} = 0}}\label{Equation:LastInstLoad}\\
	&&\thetransition{\tilde\alab_{x1}}{\tilde\alab_{x2}}{\thestore{\auxhb}{\mytrue}}\notag\\
	&&\thetransition{\tilde\alab_{x2}}{\attackerwaitlabel}{\thestore{\auxaccesslevel{\anexpr}}{\loadacc}}\notag\\[1mm]
	%
	\semattackerlastinst{\thetransition{\alab_1}{\alab_2}{\thestore{\anexpr_1}{\anexpr_2}}}&:=&
	\thetransition{\tilde\alab_1}{\tilde\alab_{x1}}{\thecondition{\auxfence = 0}}\label{Equation:LastInstStore}\\
	&&\thetransition{\tilde\alab_{x1}}{\tilde\alab_{x2}}{\thecondition{\themem{\auxdelayed{\anexpr_1}} = 0}}\notag\\
	&&\thetransition{\tilde\alab_{x2}}{\tilde\alab_{x3}}{\thestore{\anexpr_1}{\anexpr_2}}\notag\\
	&&\thetransition{\tilde\alab_{x3}}{\tilde\alab_{x4}}{\thestore{\auxhb}{\mytrue}}\notag\\
	&&\thetransition{\tilde\alab_{x4}}{\attackerwaitlabel}{\thestore{\auxaccesslevel{\anexpr_1}}{\storeacc}}\notag\\[1mm]
	%
	\semattacker{\thetransition{\alab_1}{\alab_2}{\thestore{\anexpr_1}{\anexpr_2}}} &:=&
	\thetransition{\tilde\alab_1}{\tilde\alab_{x1}}{\thecondition{\auxfence = 0}}\label{Equation:Store}\\
	&&\thetransition{\tilde\alab_{x1}}{\tilde\alab_{x2}}{\thecondition{\themem{\auxdelayed{\anexpr_1}} = 0}}\notag\\
	&&\thetransition{\tilde\alab_{x2}}{\tilde\alab_2}{\thestore{\anexpr_1}{\anexpr_2}}\notag\\
	&&\thetransition{\tilde\alab_1}{\tilde\alab_2}{\thestore{\auxdelayed{\anexpr_1}}{\auxdelayed{\anexpr_2}}}\notag\\[1mm]
	%
	\semattacker{\thetransition{\alab_1}{\alab_2}{\theload{\areg}{\anexpr}}} &:=&
	\thetransition{\tilde\alab_{1}}{\tilde\alab_{x1}}{\thecondition{\themem{\auxdelayed{\anexpr}} = 0}}\label{Equation:Load}\\
	&&\thetransition{\tilde\alab_{x1}}{\tilde\alab_{2}}{\theload{\areg}{\anexpr}}\notag\\
	&&\thetransition{\tilde\alab_{1}}{\tilde\alab_{x2}}{\thecondition{\themem{\auxdelayed{\anexpr}} \neq 0}}\notag\\
	&&\thetransition{\tilde\alab_{x2}}{\tilde\alab_{2}}{\theload{\auxdelayed{\areg}}{\auxdelayed{\anexpr}}}\notag\\[1mm]
	%
	\semattacker{\thetransition{\alab_1}{\alab_2}{\mathit{local}}}&:=& \thetransition{\tilde\alab_1}{\tilde\alab_2}{\mathit{local}}\label{Equation:Local}\\[1mm]
	%
	\semattacker{\thetransition{\alab_1}{\alab_2}{\thescfence}} &:=&\label{Equation:SCFence}\\[1mm]
	%
	\semattacker{\thetransition{\alab_1}{\alab_2}{\thefence{\anaddr_1 \ldots \anaddr_n}}} &:=&
	\thetransition{\tilde\alab_1}{\tilde\alab_2}{\areg_{fence} \leftarrow {\mytrue}}\label{Equation:Fence}\\
	&&\thetransition{\tilde\alab_1}{\tilde\alab_{x2}}{\thecondition{\themem{\auxdelayed{\anaddr_1}} = 0}}\notag\\
	&&\thetransition{\tilde\alab_{x2}}{\tilde\alab_{x3}}{\thecondition{\themem{\auxdelayed{\anaddr_2}} = 0}}\notag\\
	&&\vdots\notag\\
	&&\thetransition{\tilde\alab_{x(n-1)}}{\tilde\alab_{xn}}{\thecondition{\themem{\auxdelayed{\anaddr_{n-1}}} = 0}}\notag\\
	&&\thetransition{\tilde\alab_{xn}}{\tilde\alab_2}{\thecondition{\themem{\auxdelayed{\anaddr_n}} = 0}}\notag\\[1mm]
	%
	wait &:=&
	\thetransition{\attackerwaitlabel}{\tilde\alab_{x1}}{\thecondition{\themem{\auxaccesslevel{\auxaddress}} \neq 0}}\label{Equation:Wait}\\
	&&\thetransition{\tilde\alab_{x1}}{\tilde\alab_{x2}}{\thestore{\auxsuc}{\mytrue}}\notag
	\end{eqnarray}
	\caption{Instrumentation of the attacker for locality.}
	\label{Figure:TranslationAttackerLocality}
\end{figure}

\subsection{Optimized Instrumentation of the Attacker for Singularity}
The optimized translation for singularity is
\begin{align*}
\sem{\attacker}:= \lit*{thread}\ \tilde\attacker\ & \lit*{regs}\ \areg^*, \auxaddress, \auxdelayval\ \lit*{init}\ \alab_0\\
&\lit*{begin}\ \langle linst\rangle^*\ \semattackerstinst{\stinst}\ \semattackerlastinst{\lastinst}\ \semattacker{\langle linst\rangle}^* \ \attackerwait\ \lit*{end}\ ,
\end{align*}
where the functions are defined in Figure~\ref{Figure:OptimizedTranslationAttacker}.

\begin{figure}[ht]
	\begin{eqnarray}
	\semattackerstinst{\thetransition{\alab_1}{\alab_2}{\thestore{\anexpr_1}{\anexpr_2}}} &:=&\thetransition{\alab_1}{\tilde\alab_{x}}{\thelocal{\auxdelayval}{\anexpr_2}}\label{Equation:OptimizedStoreInst}\\
	&&\thetransition{\tilde\alab_{x}}{\tilde\alab_2}{\thelocal{\auxaddress}{\anexpr_1}}\notag\\[1mm]
	%
	\semattackerlastinst{\thetransition{\alab_1}{\alab_2}{\theload{\areg}{\anexpr}}}&:=&
	\thetransition{\tilde\alab_{1}}{\tilde\alab_{x1}}{\thecondition{\auxaddress \neq \anexpr}}\label{Equation:OptimizedLastInstLoad}\\
	&&\thetransition{\tilde\alab_{x1}}{\tilde\alab_{x2}}{\thestore{\auxhb}{\mytrue}}\notag\\
	&&\thetransition{\tilde\alab_{x2}}{\attackerwaitlabel}{\thestore{\auxaccesslevel{\anexpr}}{\loadacc}}\notag\\[1mm]
	%
	\semattackerlastinst{\thetransition{\alab_1}{\alab_2}{\thestore{\anexpr_1}{\anexpr_2}}}&:=&
	\thetransition{\tilde\alab_1}{\tilde\alab_{x1}}{\thecondition{\auxaddress \neq \anexpr_1}}\label{Equation:OptimizedLastInstStore}\\
	&&\thetransition{\tilde\alab_{x1}}{\tilde\alab_{x2}}{\thestore{\anexpr_1}{\anexpr_2}}\notag\\
	&&\thetransition{\tilde\alab_{x2}}{\tilde\alab_{x3}}{\thestore{\auxhb}{\mytrue}}\notag\\
	&&\thetransition{\tilde\alab_{x3}}{\attackerwaitlabel}{\thestore{\auxaccesslevel{\anexpr_1}}{\storeacc}}\notag\\[1mm]
	%
	\semattacker{\thetransition{\alab_1}{\alab_2}{\thestore{\anexpr_1}{\anexpr_2}}} &:=&
	\thetransition{\tilde\alab_1}{\tilde\alab_x}{\thecondition{\auxaddress \neq \anexpr_1}}\label{Equation:OptimizedStore}\\
	&&\thetransition{\tilde\alab_x}{\tilde\alab_2}{\thestore{\anexpr_1}{\anexpr_2}}\notag\\[1mm]
	%
	\semattacker{\thetransition{\alab_1}{\alab_2}{\theload{\areg}{\anexpr}}} &:=&
	\thetransition{\tilde\alab_{1}}{\tilde\alab_{x1}}{\thecondition{\auxaddress \neq \anexpr}}\label{Equation:OptimizedLoad}\\
	&&\thetransition{\tilde\alab_{x1}}{\tilde\alab_{2}}{\theload{\areg}{\anexpr}}\notag\\
	&&\thetransition{\tilde\alab_{1}}{\tilde\alab_{x2}}{\thecondition{\auxaddress = \anexpr}}\notag\\
	&&\thetransition{\tilde\alab_{x2}}{\tilde\alab_{2}}{\thelocal{\areg}{\auxdelayval}}\notag\\[1mm]
	%
	\semattacker{\thetransition{\alab_1}{\alab_2}{\mathit{local}}}&:=& \thetransition{\tilde\alab_1}{\tilde\alab_2}{\mathit{local}}\label{Equation:Local}\\[1mm]
	%
	\semattacker{\thetransition{\alab_1}{\alab_2}{\thescfence}} &:=&\label{Equation:SCFence}\\[1mm]
	%
	wait &:=&
	\thetransition{\attackerwaitlabel}{\tilde\alab_{x1}}{\thecondition{\themem{\auxaccesslevel{\auxaddress}} \neq 0}}\label{Equation:Wait}\\
	&&\thetransition{\tilde\alab_{x1}}{\tilde\alab_{x2}}{\thestore{\auxsuc}{\mytrue}}\notag
	\end{eqnarray}
	\caption{Optimized instrumentation of the attacker for singularity.}
	\label{Figure:OptimizedTranslationAttacker}
\end{figure}

\subsection{Instrumentation of Helpers}
The instrumentation of helpers is taken from~\cite{BDM13}.
We recall it here for completeness. 
Initially, a helper executes its actions normally. 
When the $\auxhb$ flag is set by the attacker, the helper executes only actions that are hb-dependent on $a$. 
To decide whether an action is hb-dependent on $a$, we use the following observation from~\cite{BDM13}. 
We have to track the maximal so-called \emph{access level} for each address, in the order
\begin{center}
\emph{no acces} $<$ \emph{load access} $<$ \emph{store access}\ .
\end{center}
So if an address $y$ has seen a load but no store, we will keep the value \emph{load access} in $(y, \auxhb)$. 
If the address has also seen a store, we keep \emph{store access} in $(y, \auxhb)$, even if there have been later loads.
The point is that the store is guaranteed to construct a dependency with further actions, while a load will only construct a conflict with subsequent stores.

For the formal definition, let the helper thread $\athread$ declare registers $\areg^*$, have initial label $\alab_0$, and define instructions $\langle linst\rangle^*$.
The instrumented thread is
\begin{align*}
\sem{\athread}:=\lit*{thread}\ \tilde\athread\ & \lit*{regs}\ \tilde\areg, \areg^*\ \lit*{init}\ \alab_0\\
&\lit*{begin}\ \semhelperorig{\langle linst \rangle}^*\  \semhelpertrans{\langle ldstinst\rangle}^*\ \semhelpercpy{\langle linst \rangle}^*\ \lit*{end}\ ,
\end{align*}
where the functions are defined in Figure~\ref{Figure:HelpersInstrumentation}.
\begin{figure}[t]
	\begin{eqnarray}
	\semhelperorig{\thetransition{\alab_1}{\alab_2}{\mathit{instr}}}&:=&
	\thetransition{\alab_1}{\alab_{x}}{\thecondition{\themem{\auxhb} = 0}}\label{Equation:HelperOriginal}\\
	&&\thetransition{\alab_x}{\alab_2}{\mathit{instr}}\notag\\
	%
	\semhelpertrans{\thetransition{\alab_1}{\alab_2}{\theload{\areg}{\anexpr}}}&:=&
	\thetransition{\alab_1}{\tilde\alab_{x}}{\thecondition{\themem{\auxaccesslevel{\anexpr}}=\storeacc}}\label{Equation:HelperLoadMove}\\
	&&\thetransition{\tilde\alab_x}{\tilde\alab_2}{\theload{\areg}{\anexpr}}\notag\\[1mm]
	%
	\semhelpertrans{\thetransition{\alab_1}{\alab_2}{\thestore{\anexpr_1}{\anexpr_2}}}&:=&
	\thetransition{\alab_1}{\tilde\alab_{x1}}{\thecondition{\themem{\auxaccesslevel{\anexpr_1}}\geq \loadacc}}\label{Equation:HelperStoreMove}\\
	&&\thetransition{\tilde\alab_{x1}}{\tilde\alab_{x2}}{\thestore{\anexpr_1}{\anexpr_2}}\notag\\
	&&\thetransition{\tilde\alab_{x2}}{\tilde\alab_{2}}{\thestore{\auxaccesslevel{\anexpr_1}}{\storeacc}}\notag\\[1mm]
	%
	\semhelpercpy{\thetransition{\alab_1}{\alab_2}{\textit{local/scfence/fence}}} &:=& \thetransition{\tilde\alab_1}{\tilde\alab_2}{\textit{local/scfence/fence}}\label{Equation:HelperLocalFence}\\[1mm]
	%
	\semhelpercpy{\thetransition{\alab_1}{\alab_2}{\thestore{\anexpr_1}{\anexpr_2}}}&:=&
	\thetransition{\tilde\alab_1}{\tilde\alab_x}{\thestore{\anexpr_1}{\anexpr_2}}\label{Equation:HelperStore}\\
	&&\thetransition{\tilde\alab_x}{\tilde\alab_2}{\thestore{\auxaccesslevel{\anexpr_1}}{\storeacc}}\notag\\[1mm]
	%
	\semhelpercpy{\thetransition{\alab_1}{\alab_2}{\theload{\areg}{\anexpr}}}&:=&
	\thetransition{\tilde\alab_1}{\tilde\alab_{x1}}{\thelocal{\tilde\areg}{\anexpr}}\label{Equation:HelperLoad}\\
	&&\thetransition{\tilde\alab_{x1}}{\tilde\alab_{x2}}{\theload{\areg}{\tilde\areg}}\notag\\
	&&\thetransition{\tilde\alab_{x2}}{\tilde\alab_2}{\thestore{\auxaccesslevel{\tilde\areg}}{\maxfun\{\loadacc, \themem{\auxaccesslevel{\tilde\areg}}\}}}\notag
	\end{eqnarray}
	\caption{Instrumentation of helpers.\label{Figure:HelpersInstrumentation}}
\end{figure}

\section{Evaluation}
We have implemented our instrumentation on top of Trencher~\cite{BDM13} and run the resulting tool on a set of example programs. Figure~\ref{Figure:porlive} shows the evaluation with partial-order reduction and liveness-based optimization. 
Figure~\ref{Figure:npornlive} shows the evaluation with the optimizations turned-off. 
The tables show how many threads, labels, and instructions the programs have, whether or not they were detected as robust, and the number of visited states for the singularity-based and locality-based analysis.
The time for the SC-reachability analysis was negligible ($<8$ seconds) in all examples.

Depending on the input program, singularity cuts down the search space to a third as compared with locality (Dekker: From 121 states with locality only 43 remain with singularity).
More importantly, this is the first (and complete) robustness analyis for PGAS programs.

\begin{figure}
\begin{tabular}{|l||r|r|r||c|r|r|}
\hline
Program&T&L&I&Robust?&Visited States (Singularity)&Visited States (Locality)\\
\hline
\hline
cilk-the-wsq-wrongdoing.txt&5&80&79&No&443&513\\
\hline
Cilk's THE WSQ (non-robust)&3&73&72&Yes&71&122\\
\hline
CLH Lock (robust)&3&42&41&No&1124&1358\\
\hline
Dekker (robust)&2&28&34&No&31&59\\
\hline
Dekker (non-robust)&2&24&30&No&35&83\\
\hline
Lamport (robust)&3&39&42&Yes&12868&12868\\
\hline
Lamport (non-robust)&3&33&36&No&142&162\\
\hline
Lock-Free Stack (robust)&4&46&50&Yes&796&820\\
\hline
MCS Lock (robust)&2&54&58&No&150&160\\
\hline
mfa.txt&3&14&11&No&122&140\\
\hline
mp.txt&2&6&4&No&20&23\\
\hline
\end{tabular}
\caption{Singularity vs. locality with partial-order reduction and liveness.}
\label{Figure:porlive}
\end{figure}

\begin{figure}
\begin{tabular}{|l||r|r|r||c|r|r|}
\hline
Program&T&L&I&Robust?&Visited States (Singularity)&Visited States (Locality)\\
\hline
\hline
cilk-the-wsq-wrongdoing.txt&5&80&79&No&532&575\\
\hline
Cilk's THE WSQ (non-robust)&3&73&72&Yes&95&150\\
\hline
CLH Lock (robust)&3&42&41&No&1289&1496\\
\hline
Dekker (robust)&2&28&34&No&37&103\\
\hline
Dekker (non-robust)&2&24&30&No&43&121\\
\hline
Lamport (robust)&3&39&42&Yes&77657&77657\\
\hline
Lamport (non-robust)&3&33&36&No&159&185\\
\hline
Lock-Free Stack (robust)&4&46&50&Yes&5504&5504\\
\hline
MCS Lock (robust)&2&54&58&No&169&176\\
\hline
mfa.txt&3&14&11&No&124&146\\
\hline
mp.txt&2&6&4&No&22&25\\
\hline
\end{tabular}
\caption{Singularity vs. locality without partial-order reduction and liveness.}
\label{Figure:npornlive}
\end{figure}

\end{document}